\newcommand{\indep}{\perp \!\!\! \perp}
\newtheorem{theorem}{Theorem}
\newtheorem{assumption}{Assumption}
\newtheorem{remark}{Remark}
\newtheorem{definition}{Definition}
\newtheorem{example}{Example}
\begin{document}
\begin{bibunit}

\title{Predicting the Distribution of Treatment Effects:\\A Covariate-Adjustment Approach}

\author{Bruno Fava\thanks{Department of Economics, Northwestern University. Contact: brunofava@u.northwestern.edu. 
I am incredibly grateful to Federico Bugni and Eric Auerbach, Ivan Canay, Joel Horowitz and Dean Karlan for their exceptional advising and mentoring, and to Ahnaf Rafi for numerous helpful discussions. 
I thank participants of the econometrics seminar at Michigan State University and at the 2024 IPA \& GPRL Research Gathering, Midwest Econometrics Group Conference 2024, and Southern Economic Association 94th Annual Meeting for helpful comments. 
All errors are my own. 
This research was supported in part through the computational resources and staff contributions provided for the Quest high performance computing facility at Northwestern University which is jointly supported by the Office of the Provost, the Office for Research, and Northwestern University Information Technology.}}
\date{\today}

\maketitle

\vspace{-0.3in}

\begin{abstract}
    Important questions for impact evaluation require knowledge not only of average effects, but of the distribution of treatment effects. 
    The inability to observe individual counterfactuals makes answering these empirical questions challenging. 
    I propose an inference approach for points of the distribution of treatment effects that uses predicted counterfactuals through covariate adjustment. 
    I provide finite-sample valid inference using sample-splitting and asymptotically valid inference using cross-fitting under arguably weak conditions. 
    Revisiting five randomized controlled trials on microcredit that reported null average effects, I find important distributional impacts, with some individuals helped and others harmed by the increased credit access. 
\end{abstract}

\bigskip \bigskip \bigskip
\noindent Keywords: Distribution of treatment effects, heterogeneous treatment effects, machine learning, causal inference, microcredit.

\smallskip
\noindent JEL classification codes: C12, C14, O12.

\clearpage 

\section{Introduction} \label{section.introduction}

\sloppy
Important questions in impact evaluation cannot be answered by average treatment effects alone. 
What proportion of individuals are harmed by a treatment? 
Does a policy help many by a little? 
Or a few by a lot? 
The distribution of treatment effects carries crucial information with important equity and efficiency implications. 
For example, if a policy harms many despite a positive average effect, it may be crucial to reconsider its implementation. 
If benefits are concentrated among a few, more efficient targeting may be needed. 
However, learning the distribution of treatment effects is challenging because of the fundamental problem of causal inference: one can never observe individual counterfactuals. 
But one may be able to predict them.

I provide an approach to inference on points of the distribution of treatment effects, 
$$\theta = P (Y(1) - Y(0) \le \delta),$$ 
for any value of $\delta$, by leveraging counterfactual predictions via covariate adjustment. 
My setting consists of a representative sample of $(Y, D, X)$, where $D$ is a binary treatment, $Y = D Y(1) + (1 - D) Y(0)$ is the observed outcome, and $X$ is a vector of pre-treatment covariates. 
I assume either $D \indep (Y(1), Y(0))$, or unconfoundedness conditional on a known propensity score. 
This is typically the setting in Randomized Controlled Trials (RCTs). 
I refer to $\theta$ as the Distributional Treatment Effect (DTE) and to the method as CAIDE (Covariate Adjustment for Inference on Distributional Effects). 
All results are pointwise in $\delta$. 

My first contribution is a new characterization of the identified set of $\theta$ in the presence of covariates that has key advantages for inference.
While \citet{fan2010sharp} uses the marginals of $Y(1)$ and $Y(0)$ for inference on $\theta$, I exploit the insight that, for any function of covariates $s$, it holds that
\begin{equation*} 
    \theta = P ( Y(1) - Y(0) \le \delta ) = P ( [Y(1) - s(X)] - [Y(0) - s(X)] \le \delta ),
\end{equation*}
so $\theta$ can be bounded using the marginals of covariate-adjusted outcomes $Y(1) - s(X)$ and $Y(0) - s(X)$. 
The choice of $s$ determines the width of the induced bounds, and informative functions of covariates can narrow the identified set relative to $s(x)=0$. 
In fact, my main identification result characterizes optimal forms of $s$ that make the induced bounds sharp---equal to the smallest valid identified set given the distribution of the observable data $(Y, D, X)$.

Building on this identification result, my main contribution is inference on $\theta$. 
The result naturally motivates a sample-splitting strategy: use one part of the data to estimate the optimal covariate-adjustment term, and the other to estimate the induced bounds and construct confidence sets. 
Because any covariate-adjustment term yields valid bounds, this construction delivers valid inference under weak conditions on how the term is estimated. 
In particular, I obtain a confidence interval that is finite-sample valid under no regularity conditions beyond the data coming from an RCT---to my knowledge, the first finite-sample inference result for the DTE in the presence of covariates. 
This guarantee is especially valuable in high-stakes applications where exact finite-sample validity under minimal assumptions takes precedence over power, but can be conservative otherwise. 
I therefore propose a second procedure based on cross-fitting that uses the entire sample for both stages, yielding asymptotic validity at the expense of additional but arguably weak conditions. 
This procedure yields higher power than the finite-sample approach and previous literature under appropriate conditions.
Neither procedure restricts rates of convergence or model complexity, allowing the use of machine learning algorithms to estimate the covariate-adjustment term.

The relevance of my contribution is illustrated in an application to microcredit, where I find statistically significant evidence of positive and negative treatment effects from increased credit access. 
While academics and policymakers have hypothesized and debated the theory underlying such distributional impacts (\citealp{banerjee2013microcredit}, \citealp{garz2021consumer}), their empirical estimation has been hampered by methodological limitations. 
Meta-analyses suggest that average impacts are small if not negligible\footnote{\citet{breza2021measuring} find negative effects of the shutdown of microfinance institutions in Andhra Pradesh, which suggests that microcredit might have large and positive equilibrium effects.} (\citealp{banerjee2015six}, \citealp{meager2019understanding}), and analyses of heterogeneous effects mostly suggest positive impacts for some and rarely negative impacts\footnote{\citet{crepon2015estimating}, for example, finds negative quantile treatment effect (QTE) on profits at the 10th percentile. 
However, QTEs are subject to the well-known limitation that they only recover the distribution of treatment effects under the strong assumption of rank-preservation; without it, they are just differences between marginal quantiles of $Y(1)$ and $Y(0)$ \citep{firpo2007efficient}, which may not have a causal interpretation. This assumption rules out, for example, that two individuals with the same value of $Y(0)$ may have different values of $Y(1)$. 
The authors also question whether this finding reflects the heterogeneity of credit access, citing low take-up and possible measurement error in the outcome.
}
(\citealp{angelucci2015microcredit}, \citealp{augsburg2015impacts}, \citealp{banerjee2015miracle}, \citealp{banerjee2019can}, \citealp{meager2022aggregating}). 
I revisit five studies that randomized microloans at the individual level and find evidence of heterogeneity, including negative treatment effects for some households. 
The estimated lower bound on the proportion of individuals who would be better off without the loan varies from $2.8\%$ to $26.0\%$, and the lower bound on the proportion who benefited ranges from $2.9\%$ to $26.1\%$.

A growing literature studies inference on $\theta$.\footnote{On identification of $\theta$, \citet{makarov1982estimates} and \citet{williamson1990probabilistic} characterize the sharp bounds on the distribution of the sum of two random variables given only their marginals (including $\theta$ as a special case), and \citet{heckman1997making} and \citet{manski1997monotone} restrict the joint distribution of potential outcomes to achieve informative bounds on $\theta$.} 
Without covariates, \citet{fan2010sharp} propose asymptotically valid confidence intervals, and \citet{ruiz2022non} provide a finite-sample approach. 
\citet{fan2010sharp} also characterize the sharp identified set of $\theta$ with covariates, but inference based on their characterization is challenging, and they do not pursue it (see Remark \ref{remark.fan_and_park}). 
\citet{kallus2022s} considers inference with covariates for binary outcomes under high-level assumptions on the data-generating process. 
Beyond the binary case, two recent and independent working papers, \citet{semenova2025debiased} and \citet{ji2024model} (henceforth SJLS), study the more general problem of averages of intersection bounds, which includes $\theta$ as a particular case. 
Their approach is based on the dual of an optimization problem, and their confidence intervals coincide for the DTE. 

In contrast, by exploiting the specific structure of the DTE problem, my covariate-adjustment approach enables a finite-sample valid confidence interval, and an asymptotic confidence interval that targets bounds always (weakly) narrower than those of SJLS by construction, yielding higher power under suitable conditions (Appendix~\ref{appendix.comparison}).
Moreover, I use the proportion of treated units in-sample instead of the true propensity score, leading to further power improvement (Appendix~\ref{appendix.prop_score}), and show asymptotic exactness under continuous outcomes, while \citet{ji2024model} derive exactness under finite outcome support. 
In practice, these differences can reveal statistical significance that would otherwise be missed, as shown in a Monte Carlo study and the application to microcredit.

The paper proceeds as follows. 
Section \ref{section.identification} describes the setup and shows the main identification results of the paper. 
Section \ref{section.finite_sample} provides finite-sample inference using sample-splitting, and Section \ref{section.asymptotics} asymptotic inference and exactness using cross-fitting. 
Section \ref{section.simulation} presents a Monte Carlo study, and Section \ref{section.application} the application to microcredit. 
Finally, Section \ref{section.conclusion} concludes. 
Proofs not displayed in the main text are deferred to the Appendix.

\section{Setup and Identification via Covariate Adjustment} \label{section.identification}

In this section, I define the setup of the paper and establish new identification results for $\theta = P ( Y(1) - Y(0) \le \delta )$ that motivate the confidence intervals I introduce in Sections \ref{section.finite_sample} and \ref{section.asymptotics}. 
After presenting the setting in Assumption \ref{as.rct}, I characterize valid bounds on $\theta$ that are induced from using the marginal distributions of potential outcomes adjusted by a covariate adjustment term $s(X)$ in~(\ref{def.induced_bounds}). 
Finally, I characterize optimal forms of $s$ that make such bounds sharp in Theorem \ref{th.best_s}, and discuss conditions for point identification in Section \ref{section.point_identification}. 

Without loss of generality, I focus on $\delta = 0$. For a different value of $\delta$, all results hold by redefining $\tilde{Y}(0) = Y(0) + \delta$.
Consider the standard potential outcomes setup, where $Y \in \mathcal{Y} \subseteq \mathbb{R}$ is the outcome, $D \in \{0,1\}$ is the treatment assignment indicator, $Y(1)$ and $Y(0)$ are potential outcomes connected by $Y = D Y(1) + (1 - D) Y(0)$, and $X \in \mathcal{X}$ denotes a set of pre-treatment covariates. 
The setting is formalized in Assumption \ref{as.rct}. 

\begin{assumption} \label{as.rct} Let $(Y, D, X) \sim P$. For some $Y_\ell, Y_u \in \mathbb{R}$ and $B > 0$, $P$ satisfies:
    \begin{enumerate}[label=\upshape(\roman*), ref= \ref{as.rct}(\roman*)]
        \item \label{as.bounded_y} \textit{(bounded outcome)} $Y \in [Y_\ell, Y_u]$;
        \item \label{as.unconfoundedness} \textit{(unconfoundedness)} $(Y(1), Y(0)) \indep D | X$; the propensity score $p(x) = P (D = 1 | X = x) = \pi$ does not depend on $x$ and satisfies $B < \pi < 1 - B$.
    \end{enumerate}
\end{assumption}

A\ref{as.bounded_y} of bounded $\mathcal{Y}$ is technical, and is discussed in Remark \ref{remark.finite_y}. 
A\ref{as.unconfoundedness} is used for identification of the CDFs of $Y(1)$ and $Y(0)$ conditional on covariates. 
It typically holds, for example, in Randomized Controlled Trials (RCTs), where $D$ is randomized. 
The constant propensity score assumption facilitates exposition, and I relax it for my asymptotic analysis, allowing for an arbitrary and known propensity score in Online Appendix \ref{appendix.relaxing_prop_score}. 
The case of an unknown propensity score that must be estimated is left for future work; in RCTs, $p(x)$ is known by design. 

The starting point for CAIDE is the fact that for any (measurable) function $s: \mathcal{X} \rightarrow \mathcal{Y}$, it holds that
\begin{equation} \label{eq.validity}
    \theta = P ( Y(1) - Y(0) \le 0 ) = P ( [Y(1) - s(X)] - [Y(0) - s(X)] \le 0 ).
\end{equation}
Therefore, I propose to bound $\theta$ using the marginal distributions of covariate-adjusted outcomes $Y(1) - s(X)$ and $Y(0) - s(X)$. 
To do so, I build on results of \citet{fan2010sharp}, who propose inference on $\theta$ using the marginals of $Y(1)$ and $Y(0)$. 
The main reason to incorporate covariates into the problem is to obtain narrower bounds on $\theta$ \citep{firpo2019partial}.

Before defining the bounds induced by covariate-adjustment, it is useful to revisit the definition of Makarov bounds (\citealp{makarov1982estimates}; \citealp{williamson1990probabilistic}), which are the sharp bounds on $\theta$ without covariates: 
\begin{equation} \label{def.theta_makarov}
    \theta \in \left[ \max_t \{ F_1(t) - F_0(t) \}, 1 + \min_t \{ F_1(t) - F_0(t) \} \right],
\end{equation}
where $F_j(t) = P ( Y(j) \le t )$, $j \in \{0,1\}$. 
By replacing the random variables $Y(1)$ and $Y(0)$ by $Y(1) - s(X)$ and $Y(0) - s(X)$, the bounds \textit{induced} by $s$ are defined as:
\begin{equation} \label{def.induced_bounds}
    \theta \in \left[ \max_t \Delta(t, s), \; 1 + \min_t \Delta(t, s) \right],
\end{equation} 
where
\begin{equation} \label{def.Delta}
    \Delta(t,s) = P ( Y(1) - s(X) \le t ) - P ( Y(0) - s(X) \le t ).
\end{equation}
Because of the equivalence in (\ref{eq.validity}) and the validity of the Makarov bounds in (\ref{def.theta_makarov}), (\ref{def.induced_bounds}) contains $\theta$ for any $s$. 
Note that if covariates are not used, e.g. if $s(x)=0$, then (\ref{def.induced_bounds}) is equivalent to (\ref{def.theta_makarov}). 

Next, I show that for specific covariate-adjustment terms, the bounds in (\ref{def.induced_bounds}) are equivalent to the sharp bounds (with covariates):
\begin{equation} \label{def.theta_sharp}
    [\theta_L^*, \theta_U^*] = \left[ \mathbb{E} \left[ \max_t \{ F_1(t | X) - F_0(t | X) \} \right], 1 + \mathbb{E} \left[ \min_t \{ F_1(t | X) - F_0(t | X) \} \right] \right],
\end{equation}
where $F_j(t | x) = P ( Y(j) \le t | X = x )$, $j=0,1$. 
This characterization and its proof of sharpness are due to \citet{fan2010sharp}. 
The bounds (\ref{def.theta_sharp}) are always weakly narrower than (\ref{def.theta_makarov}), as they exploit covariate information.
\citet{fan2010sharp} do not pursue inference based on (\ref{def.theta_sharp}), as it requires nonparametric estimation of $F_1( \cdot | X)$ and $F_0( \cdot | X)$ integrated over the distribution of $X$, rendering the asymptotic distribution intractable; see Remark \ref{remark.fan_and_park}. 
In contrast, inference based on Theorem \ref{th.best_s} below is tractable, and developed in the subsequent sections. 

\begin{theorem}[{Sharp transformations $s^*_L$ and $s^*_U$}] \label{th.best_s}
    Let Assumption \ref{as.unconfoundedness} hold. Define
    \begin{align} 
        s^*_L(x) & \in \arg \max_t \left\{ F_1(t | x) - F_0(t | x) \right\}, \label{def.best_sL} \\
        s^*_U(x) & \in \arg \min_t \left\{ F_1(t | x) - F_0(t | x) \right\}. \label{def.best_sU} 
    \end{align}
    Then, the bounds $\left[ \max_t \Delta(t, s^*_L), 1 + \min_t \Delta(t, s^*_U) \right] = [\theta_L^*, \theta_U^*]$ are sharp, the smallest valid bounds on $\theta$ given the distribution of the observable data $(Y, D, X)$. 
    That is, for every $\theta$ in this interval, there exists a joint distribution of $(Y(1), Y(0), D, X)$ with the same distribution of the observable data such that $P ( Y(1) - Y(0) \le 0 ) = \theta$. 
\end{theorem}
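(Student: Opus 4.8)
The plan is to reduce the statement to two scalar identities. Since \citet{fan2010sharp} already establish that the interval $[\theta_L^*,\theta_U^*]$ in (\ref{def.theta_sharp}) is sharp---i.e.\ that it is the smallest valid interval and that every point in it is attained by some joint law of $(Y(1),Y(0),D,X)$ compatible with the observable distribution---it suffices to prove that the covariate-adjusted characterization coincides with it, namely that $\max_t \Delta(t,s^*_L)=\theta_L^*$ and $1+\min_t \Delta(t,s^*_U)=\theta_U^*$. Once these equalities hold, the achievability and minimality claims transfer verbatim from their result.

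First I would rewrite $\Delta$. Applying the law of iterated expectations and using that Assumption \ref{as.unconfoundedness} identifies the conditional CDFs $F_j(\,\cdot\mid x)$ from the observed data, I get, for every $t$ and every measurable $s:\mathcal{X}\to\mathcal{Y}$,
\[
\Delta(t,s)=\mathbb{E}\big[\,F_1(t+s(X)\mid X)-F_0(t+s(X)\mid X)\,\big].
\]
The lower-bound identity then follows from a two-sided argument. Evaluating at $t=0$ and using that $s^*_L(x)$ attains $\max_u\{F_1(u\mid x)-F_0(u\mid x)\}$ pointwise in $x$, the integrand equals the conditional maximum, so $\Delta(0,s^*_L)=\theta_L^*$ and hence $\max_t\Delta(t,s^*_L)\ge\theta_L^*$. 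Conversely, for any fixed $t$ the integrand $F_1(t+s^*_L(X)\mid X)-F_0(t+s^*_L(X)\mid X)$ is bounded above by $\max_u\{F_1(u\mid X)-F_0(u\mid X)\}$, and taking expectations gives $\Delta(t,s^*_L)\le\theta_L^*$ for all $t$; thus $\max_t\Delta(t,s^*_L)\le\theta_L^*$, closing the equality. The upper-bound identity is entirely symmetric, replacing $\max$ by $\min$ and $s^*_L$ by $s^*_U$: the value at $t=0$ gives $\Delta(0,s^*_U)=\mathbb{E}[\min_u\{F_1(u\mid X)-F_0(u\mid X)\}]$, so $1+\min_t\Delta(t,s^*_U)\le\theta_U^*$, while the pointwise lower bound $F_1(\cdot\mid X)-F_0(\cdot\mid X)\ge\min_u\{F_1(u\mid X)-F_0(u\mid X)\}$ integrates to the reverse inequality.

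I expect the only genuine technical obstacle to be the existence and measurability of $s^*_L$ and $s^*_U$ and the attainment of the conditional extrema, rather than the identities themselves. I would dispatch this using boundedness of $\mathcal{Y}$ (Assumption \ref{as.bounded_y}) together with the fact that $t\mapsto F_1(t\mid x)-F_0(t\mid x)$ is a càdlàg function of bounded variation on the compact interval $[Y_\ell,Y_u]$, so that its supremum and infimum are attained (up to taking left limits); a measurable selection theorem then yields measurable maximizers and minimizers $s^*_L,s^*_U$, which is all that the expectations in the argument above require.
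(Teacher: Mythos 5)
Your proposal is correct and follows essentially the same route as the paper: both reduce the theorem to the identity $\Delta(0,s^*_L)=\theta_L^*$ (via the law of iterated expectations and the definition of $s^*_L$), with achievability and minimality imported from the sharpness result of \citet{fan2010sharp}. The only cosmetic difference is that you establish $\max_t\Delta(t,s^*_L)\le\theta_L^*$ by integrating the pointwise bound $F_1(t+s^*_L(x)\mid x)-F_0(t+s^*_L(x)\mid x)\le\max_u\{F_1(u\mid x)-F_0(u\mid x)\}$ over $X$, whereas the paper obtains the same inequality by combining validity of the induced bounds in (\ref{def.induced_bounds}) with the sharpness of $\theta_L^*$.
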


\begin{proof}
    For the lower bound,
    \begin{align}
        \theta_L^* & \ge \max_t \Delta(t, s^*_L) \label{proof.thsstar.1} \\ 
        & \ge \Delta(0, s^*_L) \nonumber \\
        & = P ( Y(1) - s^*_L(X) \le 0 ) - P ( Y(0) - s^*_L(X) \le 0 ) \label{proof.thsstar.3} \\
        & = \mathbb{E} \left[ P ( Y(1) - s^*_L(X) \le 0 | X ) - P ( Y(0) - s^*_L(X) \le 0 | X ) \right] \label{proof.thsstar.4} \\
        & = \mathbb{E} \left[ F_1(s^*_L(X) | X) - F_0(s^*_L(X) | X ) \right] \label{proof.thsstar.5} \\
        & = \mathbb{E} \left[ \max_t \left\{ F_1(t | X) - F_0(t | X) \right\} \right] \label{proof.thsstar.6} \\
        & = \theta_L^*, \label{proof.thsstar.7}
    \end{align}
    where (\ref{proof.thsstar.1}) holds from (\ref{def.induced_bounds}) and since $\theta_L^*$ is sharp \citep{fan2010sharp}; (\ref{proof.thsstar.3}) by definition of $\Delta$; (\ref{proof.thsstar.4}) by the law of iterated expectations; (\ref{proof.thsstar.5}) by definition of $F_j(t | x)$ ($j=0,1$); (\ref{proof.thsstar.6}) by the definition of $s^*_L$; and (\ref{proof.thsstar.7}) by definition of $\theta_L^*$ (\ref{def.theta_sharp}). 
    The upper bound case is analogous.
\end{proof}

Theorem \ref{th.best_s} is new: it shows that the sharp bounds on $\theta$ in the presence of covariates can be obtained via covariate adjustment. 
That is, for specific functions $s^*_L(x)$ and $s^*_U(x)$, the induced bounds in (\ref{def.induced_bounds}) are equal to the sharp bounds in (\ref{def.theta_sharp}). 
Note that $s^*_L(x)$ and $s^*_U(x)$ need not be unique and are generally different. 
Hence, in general, there is no single function $s$ that induces both lower and upper bounds to be sharp in (\ref{def.induced_bounds}), and in the next sections I propose to estimate both $s^*_L(x)$ and $s^*_U(x)$. 

\begin{remark} \label{remark.fan_and_park}
    (Comparison between Theorem \ref{th.best_s} and equation \ref{def.theta_sharp}).

    My representation of the identified set in Theorem \ref{th.best_s} has two major advantages for inference compared to \citet{fan2010sharp}'s characterization in (\ref{def.theta_sharp}): inference is tractable when the conditional distributions are estimated nonparametrically, and robust to misspecification if they are estimated parametrically. 
    I show in Appendix \ref{appendix.fanpark} that the asymptotic distribution of plug-in estimators based on (\ref{def.theta_sharp}) depends on the specific estimation method used for the conditional distributions, even with sample-splitting. 
    It is difficult to characterize this asymptotic distribution without strong assumptions on the convergence rates of the conditional distribution estimators. 
    Since the bounds in (\ref{def.induced_bounds}) are valid for any $s$, inference based on Theorem \ref{th.best_s} is robust to misspecification: if $(s^*_L, s^*_U)$ are estimated inconsistently, for example from using a misspecified parametric estimator, the resulting bounds are wider but valid.  
\end{remark}

\begin{remark} \label{remark.finite_y}
    The assumption of bounded $\mathcal{Y}$ ensures that $s^*_L(x)$ and $s^*_U(x)$ are finite, which is important for the equivalence in (\ref{eq.validity}) when using the marginals of $Y(1) - s(X)$ and $Y(0) - s(X)$. 
    This assumption can be relaxed by defining a new outcome $\widetilde{Y} = \Phi(Y)$, where $\Phi$ is any bounded, strictly increasing function, such as the standard normal CDF. $\widetilde{Y}$ is always bounded, and $P ( Y(1) - Y(0) \le 0 ) = P ( \widetilde{Y}(1) - \widetilde{Y}(0) \le 0 ) = P ( [\widetilde{Y}(1) - s(X)] - [\widetilde{Y}(0) - s(X)] \le 0 )$ holds.
\end{remark}

\begin{remark}
    In principle, functions $s$ that differ from the optimal $s^*_L(x)$ and $s^*_U(x)$ can lead to wider bounds, potentially even wider than those obtained using $s(x) = 0$. 
    For example, if $X$ is independent of potential outcomes, using $s(x) = x$ with a single scalar covariate introduces noise, making the bounds wider. 
    This phenomenon, however, is not of concern when $s^*_L(x)$ and $s^*_U(x)$ are consistently estimated, since they yield the sharp bounds, which are by construction (weakly) narrower than those using $s(x)=0$. 
    Moreover, one can always calculate bounds with and without covariate adjustment and take their intersection. 
\end{remark}

\subsection{Conditions for Point Identification} \label{section.point_identification}

\begin{figure}[ht]
    \centering
    \begin{minipage}{\textwidth}
        \includegraphics[width=\linewidth]{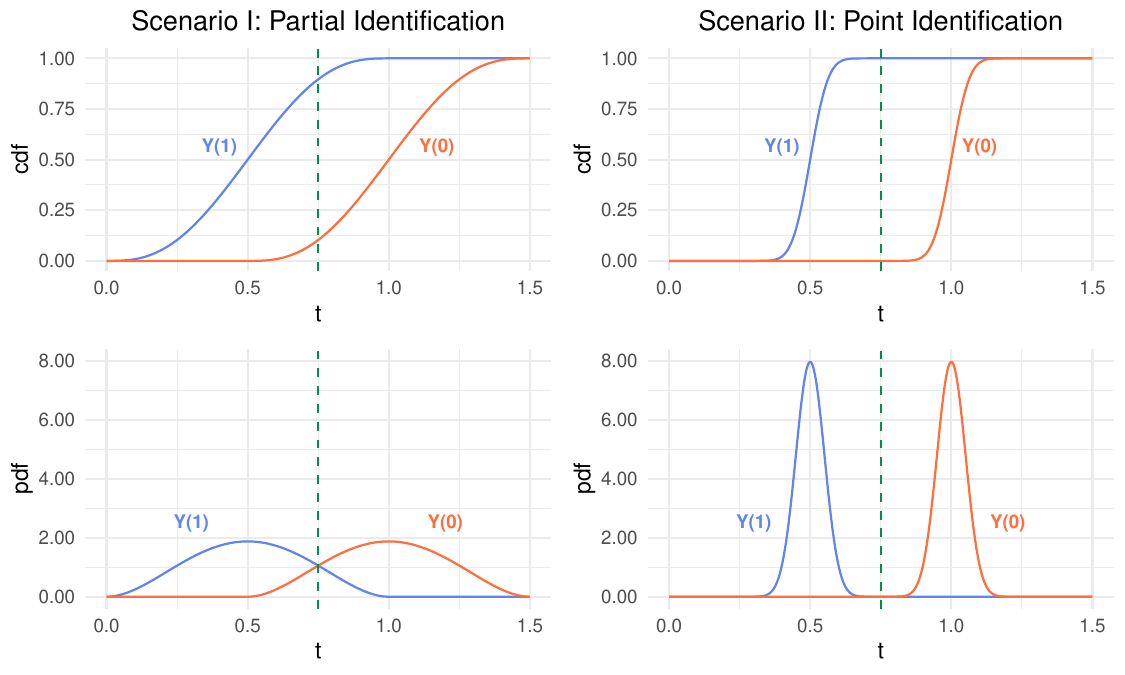}    
        {\footnotesize Blue lines represent the distribution of $Y(1)$ conditional on a given value of $X = x$, and red lines the conditional distribution of $Y(0)$. 
        Panels in the first row show the cumulative distributions, and in the second row, the probability density functions. 
        The left column shows a scenario with partial identification, and the right column shows a scenario with point identification. 
        The green dashed line represents the point $t$ that maximizes $F_1(t | x) - F_0(t | x)$. 
        In both scenarios $\min_t F_1(t | x) - F_0(t | x) = 0$, for example when $t = 0$.}
    \end{minipage}
    \caption{Example scenarios of distributions of potential outcomes conditional on a given value of covariates $X = x$.}
    \label{fig.identification}
\end{figure}

From equation (\ref{def.theta_sharp}), a sufficient condition for point identification of $\theta$ is that $\max_t \{ F_1(t | x) - F_0(t | x) \} = 1 + \min_t \{ F_1(t | x) - F_0(t | x) \}$ for all $x$. 
Figure \ref{fig.identification} gives intuition for when this condition holds, focusing on a single value of $X = x$ for ease of exposition. 
All panels show the distributions of $Y(1)$ (in blue) and $Y(0)$ (in red) conditional on $X = x$, which represent the problem of predicting the potential outcomes for this value of $x$. 
The upper panels show the cumulative distributions and the lower panels show the probability densities. 
Note that predictions are not a single point because potential outcomes are random variables even conditional on covariates. 
In both columns, $\min_t \{ F_1(t | x) - F_0(t | x) \} = 0$ (for example when $t = 0$), and $F_1(t | x) - F_0(t | x)$ is maximized at $t = 0.75$ (green dashed line). 

In the lower left panel, the distributions of $Y(1)$ and $Y(0)$ are partially overlapping, and the upper left panel reveals that $\max_t \{ F_1(t | x) - F_0(t | x) \} < 1 + \min_t \{ F_1(t | x) - F_0(t | x) \} = 1$. 
Since there is overlap, one cannot accurately predict if $Y(1) - Y(0) < 0$ or not. 
For example, if an individual is at the right tail of $Y(1)$ (blue) but at the left tail of $Y(0)$ (red), then $Y(1) - Y(0) > 0$. 
If the opposite happens, then $Y(1) - Y(0) < 0$. 
However, since the overlap is small, the probability that $Y(1) - Y(0) < 0$ is large, and the fraction harmed (for this value of $X=x$) must be close to one. 
In the lower right panel, the distributions are disjoint, and $\max_t \{ F_1(t | x) - F_0(t | x) \} = 1 + \min_t \{ F_1(t | x) - F_0(t | x) \} = 1$ (upper right panel). 
In this case, even though the covariates are not informative enough to get a point prediction of $Y(1)$ and $Y(0)$, they are informative enough to learn that $Y(1) < Y(0)$ with probability one for this value of $X = x$, and thus conclude that all individuals with $X = x$ are harmed. 

Note that the interval in (\ref{def.theta_sharp}) is an average of $\max_t \{ F_1(t | X) - F_0(t | X) \}$ and $1 + \min_t \{ F_1(t | X) - F_0(t | X) \}$ over values of $X$. 
Hence, the figure illustrates that the informativeness of covariates in predicting the outcomes is crucial for having a narrow interval for $\theta$. 
In fact, if the predictions (conditional distributions) of $Y(1)$ and $Y(0)$ are disjoint for all values of $X$, then $\theta$ is point-identified. 
Note that in Figure \ref{fig.identification}, the green line represents the value of $s^*_L(x)$, which is the point $t$ that maximizes the distance $F_1(t | x) - F_0(t | x)$. 
Hence, $s^*_L(x)$ is the point that makes the two distributions as \textit{separated} as possible. 
This fact is what motivates the notation $s$, for \textit{separator} function. 

\section{Finite Sample Inference with Sample-Splitting} \label{section.finite_sample}

In this section, I propose sample-splitting estimators for $(\theta_L^*, \theta_U^*)$ and a confidence interval (CI) for $\theta$ that is valid in finite samples.
This guarantee is especially useful in high-stakes applications where one is willing to trade off power to have a CI that is always valid, as opposed to approximately valid in large samples.
Given an i.i.d. sample $(Y_i, D_i, X_i)_{i=1}^n \sim P$ that satisfies A\ref{as.rct}, the estimation strategy consists of randomly splitting the sample into two, using the first set to estimate $(s^*_L, s^*_U)$ with $(\hat{s}_L, \hat{s}_U)$, and the second set to estimate the bounds $[\max_t \Delta(t, \hat{s}_L), 1 + \min_t \Delta(t, \hat{s}_U)]$ (as in equation \ref{def.induced_bounds}) with a sample analogue. 
Conditional on the first sample, these bounds are valid due to (\ref{eq.validity}), and finite sample inference is achieved by applying the DKW inequality \citep{dvoretzky1956asymptotic, massart1990tight}.

I propose sample-splitting estimators for the sharp bounds $(\theta_L^*, \theta_U^*)$:
\begin{definition}[Sample-splitting estimators] \label{def.estimator_ss} \,
    \begin{itemize}
        \item[(i)] Randomly split the sample into two sets, denoted \textit{main} and \textit{auxiliary} sets. Denote by $\mathcal{I}_M^{1} \subset \{1, \dots, n \}$ the units in the main sample assigned to treatment, and $\mathcal{I}_M^{0}$ the units assigned to control;
        \item[(ii)] Using data from the auxiliary sample only, estimate $(s^*_L, s^*_U)$ by replacing $F_1(t | X)$ and $F_0(t | X)$ in (\ref{def.best_sL}) and (\ref{def.best_sU}) by any estimators of the conditional CDFs $\hat{F}_1(t | X)$ and $\hat{F}_0(t | X)$. Denote the estimators $(\hat{s}_L, \hat{s}_U)$.
        \item[(iii)] In the main sample, compute the final estimators
        \begin{align*}
        \widetilde{\theta}_L & = \max_{t} \left\{ \frac{1}{|\mathcal{I}^{1}_M|} \sum_{i \in \mathcal{I}^{1}_M} \mathbb{I} \left(Y_i - \hat{s}_{L}(X_i) \le t \right) - \frac{1}{|\mathcal{I}^{0}_M|} \sum_{i \in \mathcal{I}^{0}_M} \mathbb{I} \left(Y_i - \hat{s}_{L}(X_i) \le t \right) \right\}, \\
        \widetilde{\theta}_U & = 1 + \min_{t} \left\{ \frac{1}{|\mathcal{I}^{1}_M|} \sum_{i \in \mathcal{I}^{1}_M} \mathbb{I} \left(Y_i - \hat{s}_{U}(X_i) \le t \right) - \frac{1}{|\mathcal{I}^{0}_M|} \sum_{i \in \mathcal{I}^{0}_M} \mathbb{I} \left(Y_i - \hat{s}_{U}(X_i) \le t \right) \right\}. 
        \end{align*}
    \end{itemize}
\end{definition}
$F_1(t | X)$ and $F_0(t | X)$ can be estimated with a variety of methods, including machine learning algorithms or other nonparametric estimators (see, e.g., \citealp{kneib2023rage} for a review). 
I discuss guidelines for estimating $\hat{s}$ in Appendix \ref{appendix.ml}. 
Although \textit{any} method can be used to estimate $F_1(t | X)$ and $F_0(t | X)$, estimates $(\hat{s}_L, \hat{s}_U)$ closer to $(s^*_L, s^*_U)$ yield bounds $[\max_t \Delta(t, \hat{s}_L), 1 + \min_t \Delta(t, \hat{s}_U)]$ that are closer to the sharp $[\max_t \Delta(t, s^*_L), 1 + \min_t \Delta(t, s^*_U)]$, and poor estimators can lead to wider intervals.

A default choice is to make the main and auxiliary samples roughly the same size, but that is unnecessary. 
In general, a larger auxiliary sample helps estimate $(\hat{s}_L, \hat{s}_U)$ closer to $(s^*_L, s^*_U)$, which improves the length of the target bounds $[\max_t \Delta(t, \hat{s}_L), 1 + \min_t \Delta(t, \hat{s}_U)]$. 
On the other hand, a larger auxiliary sample implies a smaller main sample, which leads to larger variance of $(\widetilde{\theta}_L, \widetilde{\theta}_U)$ around the target bounds.

\begin{theorem}[\hyperlink{proof.th.fs_cis}{Finite-sample validity of sample-splitting CI}] \label{th.fs_cis}
    Let Assumption \ref{as.rct} hold. Then, for any $\alpha \in (0,1)$ and any estimators $\hat{s}_{L}$ and $\hat{s}_{U}$, it holds that
    $$P \left( \theta \in \left[\widetilde{\theta}_L - c_\alpha, 1 \right] \right) \ge 1 - \alpha, \; \; P \left( \theta \in \left[ 0, \widetilde{\theta}_U + c_\alpha \right] \right) \ge 1 - \alpha,$$
    $$P \left( \theta \in \left[ \widetilde{\theta}_L - c_{\alpha/2}, \widetilde{\theta}_U + c_{\alpha/2} \right] \right) \ge 1 - \alpha.$$
    where
    $$c_\alpha = \left( \frac{\log(2 / \alpha)}{2} \right)^\frac{1}{2} \left( {|\mathcal{I}^{1}_M|}^{-\frac{1}{2}} + {|\mathcal{I}^{0}_M|}^{-\frac{1}{2}} \right).$$
\end{theorem}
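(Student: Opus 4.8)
The plan is to condition on the auxiliary sample throughout, so that $\hat{s}_L$ and $\hat{s}_U$ become fixed (non-random) functions, and then reduce each coverage statement to a two-sample Dvoretzky--Kiefer--Wolfowitz (DKW) deviation bound computed on the main sample. The structural input that makes this work is the validity of the induced bounds (\ref{def.induced_bounds}): for the fixed function $\hat{s}_L$ we have $\theta \ge \max_t \Delta(t,\hat{s}_L)$, and for $\hat{s}_U$ we have $\theta \le 1 + \min_t \Delta(t,\hat{s}_U)$, no matter how these functions were produced. Hence the one-sided lower statement reduces to bounding how far $\widetilde{\theta}_L$ can exceed the population bound $\max_t \Delta(t,\hat{s}_L)$, and symmetrically for the upper statement.

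First I would verify the sampling representation. Conditional on the auxiliary sample and on the realized treatment/split assignment in the main sample, the adjusted observations $\{Y_i - \hat{s}_L(X_i): i \in \mathcal{I}^1_M\}$ are i.i.d.\ draws from the law of $Y(1) - \hat{s}_L(X)$, and $\{Y_i - \hat{s}_L(X_i): i \in \mathcal{I}^0_M\}$ are i.i.d.\ from the law of $Y(0) - \hat{s}_L(X)$. This uses Assumption \ref{as.unconfoundedness} twice: unconfoundedness gives $Y(j)\mid D=j,X \sim Y(j)\mid X$, and the constant propensity score gives $D \indep X$, so the treated (control) units reproduce the full marginal of $X$ and the corresponding empirical c.d.f.\ is unbiased for $F^{\hat{s}_L}_1(t)=P(Y(1)-\hat{s}_L(X)\le t)$ (resp.\ $F^{\hat{s}_L}_0$). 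Boundedness (Assumption \ref{as.bounded_y}) enters only to guarantee the adjusted outcomes and separators are finite and well defined; the DKW step itself is distribution-free.

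The core estimate is a \emph{one-sided} decomposition. Letting $\widehat{F}^M_1,\widehat{F}^M_0$ denote the treated/control empirical c.d.f.'s of the $\hat{s}_L$-adjusted outcomes and $t^*$ the empirical maximizer defining $\widetilde{\theta}_L$, the bound $\max_t \Delta(t,\hat{s}_L) \ge \Delta(t^*,\hat{s}_L)$ gives
\begin{equation*}
\widetilde{\theta}_L - \max_t \Delta(t,\hat{s}_L)
\;\le\; \sup_t \big[ \widehat{F}^M_1(t) - F^{\hat{s}_L}_1(t) \big]
\;+\; \sup_t \big[ F^{\hat{s}_L}_0(t) - \widehat{F}^M_0(t) \big].
\end{equation*}
The point I would stress is that both terms are one-sided suprema, so I invoke the one-sided DKW--Massart inequality, which carries constant $1$ rather than $2$: $P(\sup_t[\widehat{F}_n-F] > \varepsilon) \le e^{-2n\varepsilon^2}$. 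Taking $\varepsilon_1=(\log(2/\alpha)/(2|\mathcal{I}^1_M|))^{1/2}$ and $\varepsilon_0=(\log(2/\alpha)/(2|\mathcal{I}^0_M|))^{1/2}$ makes each tail probability exactly $\alpha/2$, their sum equals $c_\alpha$, and a union bound yields $P(\widetilde{\theta}_L - \max_t\Delta(t,\hat{s}_L) > c_\alpha \mid \mathrm{aux}) \le \alpha$. Since $\theta \ge \max_t\Delta(t,\hat{s}_L)$, this gives the first one-sided statement conditionally; the tower property removes the conditioning because the bound $\alpha$ does not depend on the conditioning variables, only the realized subsample sizes entering $c_\alpha$. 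The upper statement is symmetric, using $\hat{s}_U$, the empirical minimizer, and the opposite one-sided deviations.

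For the two-sided interval I would apply each one-sided result at level $\alpha/2$ (hence with $c_{\alpha/2}$) and combine the lower-failure and upper-failure events by a union bound, for total error at most $\alpha$. The main obstacle to recovering the stated constant exactly --- rather than a looser $\log(4/\alpha)$ --- is recognizing that only one-sided deviations are needed and that the two-sample gap should be split into a $Y(1)$-upper deviation and a $Y(0)$-lower deviation, each handled by the sharper one-sided DKW inequality; once this is in place, the remaining steps are routine.
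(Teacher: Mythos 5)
Your proposal is correct and follows essentially the same route as the paper's proof: condition on the auxiliary sample so that $\hat{s}_L,\hat{s}_U$ are fixed, use the validity of the induced bounds (\ref{def.induced_bounds}) to reduce coverage to controlling the one-sided gap between $\widetilde{\theta}_L$ and the population bound, split that gap into a treated-sample upward deviation and a control-sample downward deviation, apply the one-sided DKW--Massart inequality to each at level $\alpha/2$ with a union bound, and remove the conditioning by the tower property; the two-sided interval then follows by combining the one-sided results at level $\alpha/2$. The only cosmetic difference is that you invoke the one-sided Massart bound directly for the downward deviation of the control empirical cdf, whereas the paper routes this through a symmetry claim, but the substance is identical.
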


Theorem \ref{th.fs_cis} provides confidence intervals that cover $\theta$ with probability at least $1 - \alpha$ for any size of the main and auxiliary samples. 
Since $\theta \in [0,1]$, the one-sided CIs are particularly useful for testing $\theta=0$ or $\theta=1$. 
The proof of the theorem exploits sample splitting to take $\hat{s}_L, \hat{s}_U$ as fixed in the main sample. 
Then, since by (\ref{def.induced_bounds}) $\theta$ must lie within $[\max_t \Delta(t, \hat{s}_L), 1 + \min_t \Delta(t, \hat{s}_U)]$, the CIs are constructed to cover this interval.

The finite-sample guarantee comes from the DKW inequality \citep{dvoretzky1956asymptotic, massart1990tight}, a concentration inequality for empirical CDFs. 
A similar approach was implemented in \citet{ruiz2022non} for finite-sample inference on $\theta$ without covariates. 
My proof differs from \citet{ruiz2022non}'s in that I account for sample splitting, and I use the one-sided DKW inequality of \citet{massart1990tight} to obtain a smaller critical value for the one-sided CIs. 

Note that a large main sample leads to $c_\alpha$ being close to zero, and a large auxiliary sample makes a suitable choice of $(\hat{s}_L, \hat{s}_U)$ closer to $(s^*_L, s^*_U)$. Therefore, under regularity conditions, the CIs in Theorem \ref{th.fs_cis} collapse into the sharp bounds in (\ref{def.theta_sharp}) as both sample sizes grow to infinity. 
The next section investigates the large sample behavior of a similar estimator, which allows for cross-fitting instead of sample-splitting.

\section{Large Sample Inference with Cross-Fitting} \label{section.asymptotics}

In this section, I propose confidence intervals for $\theta$ based on cross-fitting that improve power over the sample-splitting approach of Section \ref{section.finite_sample}. 
While that approach provides a strong guarantee of coverage under weak conditions, it may be unpowered for two reasons: it uses only part of the data to estimate $(\hat{s}_L, \hat{s}_U)$ and the induced bounds, and the DKW-based critical values may be conservative for many distributions $P$. 
The cross-fitting approach addresses the first issue by using the full sample for both tasks, and the second by replacing the DKW critical values with asymptotic ones that are exact for any $P$ under suitable conditions.
These gains come at the cost of additional regularity conditions and a weaker coverage guarantee: asymptotic rather than exact finite-sample validity.
I define the conditions in A\ref{as.asymptotic} and show in Theorem \ref{th.asymptotic_dist_simplified} an intermediate result, that the estimators are asymptotically Gaussian. 
The main result is then established in Theorem \ref{th.cis}, which shows uniform validity of CIs for $\theta$. 
Finally, Theorem \ref{th.exactness} shows conditions under which the CIs are exact.

\begin{definition}[Cross-fitting estimators] \label{def.estimator_cf} \,
    \begin{enumerate}[label=\upshape(\roman*), ref= \ref{def.estimator_cf}(\roman*)]
        \item Randomly split the treatment sample into $K$ sets of the same size, that is, a partition of $\{1 \le i \le n : D_i = 1 \}$ into equal-sized folds $\{ \mathcal{I}^1_1, \dots, \mathcal{I}^1_K \}$. Similarly, split the control sample into $\{ \mathcal{I}^0_1, \dots, \mathcal{I}^0_K \}$, and let $\mathcal{I}_k = \mathcal{I}^1_k \cup \mathcal{I}^0_k$;
        \item For all $k \in \{1, \dots, K \}$, use data from all folds except $\mathcal{I}_k$ to fit estimators $\hat{s}_{L}, \hat{s}_{U}$ to target $s^*_L, s^*_U$, as in Definition \ref{def.estimator_ss}.ii. Denote by $\hat{s}_{L, k}, \hat{s}_{U, k}$ the estimated functions for each $k$;
        \item \label{def.theta_hat_cf} Denote by $k(i)$ the index of the fold such that $i \in \mathcal{I}_{k(i)}$, $n_1 = \sum_{i=1}^n D_i$ and $n_0 = \sum_{i=1}^n 1 - D_i$. Compute the final estimators
        \begin{align*}
            \hat{\theta}_{L} & = \max_t \Biggl\{ \frac{1}{n_1} \sum_{i : D_i = 1} \mathbb{I}(Y_i \le \hat{s}_{L, k(i)}(X_i) + t) - \frac{1}{n_0} \sum_{i : D_i = 0} \mathbb{I}(Y_i \le \hat{s}_{L, k(i)}(X_i) + t) \Biggr\}, \\
            \hat{\theta}_{U} & = 1 + \min_t \Biggl\{ \frac{1}{n_1} \sum_{i : D_i = 1} \mathbb{I}(Y_i \le \hat{s}_{U, k(i)}(X_i) + t) - \frac{1}{n_0} \sum_{i : D_i = 0} \mathbb{I}(Y_i \le \hat{s}_{U, k(i)}(X_i) + t) \Biggr\}.
        \end{align*}
    \end{enumerate}
\end{definition}
Again, several methods can be used to estimate $(\hat{s}_{L,k}, \hat{s}_{U,k})$, including machine learning algorithms or other nonparametric approaches (see, e.g., \citealp{kneib2023rage} for a review). 
$K$ is assumed fixed as $n \to \infty$, and typical choices are $K=5$ or $K=10$.

To derive the asymptotic properties of the cross-fitting estimators, I introduce the following regularity conditions. 
The results are uniform over a set of probability measures $\mathcal{P}$, as the literature on inference in partially identified models argues that uniformity is necessary to obtain approximations that accurately represent finite-sample behavior (e.g., \citealp{imbens2004confidence, andrews2010inference}). 
To this end, I add a subscript $P$ throughout: $\Delta_P$ for (\ref{def.Delta}); $s^*_{L,P}$ and $s^*_{U,P}$ for the sharp transformations in (\ref{def.best_sL})--(\ref{def.best_sU}); and $(\theta_{L,P}^*, \theta_{U,P}^*)$ for the sharp bounds in (\ref{def.theta_sharp}).

\begin{assumption} \label{as.asymptotic} The set of probability measures $\mathcal{P}$ satisfies:
    \begin{enumerate}[label=\upshape(\roman*), ref= \ref{as.asymptotic}(\roman*)]
        \item \label{as.cont_outcomes} Continuity of distributions of potential outcomes: the sets of CDFs $\{ t \mapsto P( Y(j) \le t ) : P \in \mathcal{P} \}$ for $j=0,1$ are equicontinuous;
        \item \label{as.limit_of_s} Limit of $\hat{s}_{L}$ and $\hat{s}_{U}$: There exists $s_{L,P}, s_{U,P}$ for all $P \in \mathcal{P}$ such that, for all $\varepsilon > 0$, $\sup_{P \in \mathcal{P}} P(|\hat{s}_{L}(X) - s_{L,P}(X)| > \varepsilon) \to 0$ and $\sup_{P \in \mathcal{P}} P(|\hat{s}_{U}(X) - s_{U,P}(X)| > \varepsilon) \to 0$ as $n \to \infty$;
        \item \label{as.uniqueness_t_infty} Uniqueness of optimizers: There exists $t_{max, P}, t_{min, P}$ for all $P \in \mathcal{P}$ such that, for all $\varepsilon > 0$,
        \vspace{-\baselineskip}
        \begin{align*}
            \sup_{P \in \mathcal{P}} \sup_{t:|t - t_{max,P}| \ge \varepsilon} & \Delta_P(t, s_{L,P}) - \Delta_P(t_{max,P}, s_{L,P}) < 0, \\ 
            \sup_{P \in \mathcal{P}} \sup_{t:|t - t_{min,P}| \ge \varepsilon} & \Delta_P(t, s_{U,P}) - \Delta_P(t_{min,P}, s_{U,P}) > 0;
        \end{align*}
        \item \label{as.bounded_variance} Lower bounded variance: For some $C>0$ and for $j = 0, 1$, $C < P(Y(j) \le s_{L,P}(X) + t_{max,P}) < 1 - C$ and $C < P(Y(j) \le s_{U,P}(X) + t_{min,P}) < 1 - C$ for all $P \in \mathcal{P}$.
    \end{enumerate}
\end{assumption}

A\ref{as.cont_outcomes} restricts the possibility of point masses in the distribution of the outcome. 
A\ref{as.limit_of_s} requires the estimators $(\hat{s}_{L}, \hat{s}_{U})$ to have any limit in probability, at any rate of convergence, and allows them to be inconsistent for the sharp $(s^*_{L,P}, s^*_{U,P})$. 
A\ref{as.uniqueness_t_infty} is assumed to simplify exposition. 
It ensures the asymptotic distribution is first-order insensitive to the choice of the optimizers in Definition \ref{def.theta_hat_cf}, and it is a generalization for $P \in \mathcal{P}$ of the assumption used in \citet{fan2010sharp} in their approach to inference on Makarov bounds without covariates. 
I discuss two approaches for relaxing A\ref{as.uniqueness_t_infty} in the presence of multiple optimizers in the Online Appendix (\ref{appendix.relaxing_maximizer}). 
One of them allows dropping A\ref{as.uniqueness_t_infty} by using the bootstrap for inference instead of relying on normal approximations to $\hat{\theta}_{L}$ and $\hat{\theta}_{U}$. 
Example \ref{ex.unique_optimizers} illustrates a data generating process where A\ref{as.uniqueness_t_infty} holds. 
A\ref{as.bounded_variance} ensures that $Var[\sqrt{n} \hat{\theta}_{L}]$ and $Var[\sqrt{n} \hat{\theta}_{U}]$ do not converge to zero, and it is only used to apply the result from \citet{stoye2009more} to get asymptotic validity of the two-sided CI in Theorem \ref{th.cis}. 

\begin{example}[Uniqueness of optimizers] \label{ex.unique_optimizers}
    Consider a data generating process where $Y(0) = f(X) + \varepsilon$, $Y(1) = Y(0) + \tau$, $\varepsilon, \tau \indep X$ and $\mathbb{E}[Y(1)] = \mathbb{E}[Y(0)] = 0$. 
    For simplicity, I focus on a single probability function $P$, that is, $\mathcal{P}$ is a singleton. 
    When the distributions of $Y(1)$ and $Y(0)$ are continuous and symmetric around zero, the optimal covariate-adjustment term is the same for both the lower and upper bounds and is given by $s^*(x) = \mathbb{E}[Y(0) | X = x] = f(x)$. 
    The distributions of $Y(1) - s^*(X)$ (in blue) and $Y(0) - s^*(X)$ (in red) are shown in Figure \ref{fig.example_optimizer} below. 
    The difference $\Delta_P(t, s^*) = P(Y(1) - s^*(X) \le t) - P(Y(0) - s^*(X) \le t)$ is uniquely maximized and minimized respectively at the green and yellow vertical dashed lines, where the probability density functions intersect. 
    \begin{figure}[ht]
        \centering
        \begin{minipage}{\textwidth}
            \includegraphics[width=\linewidth]{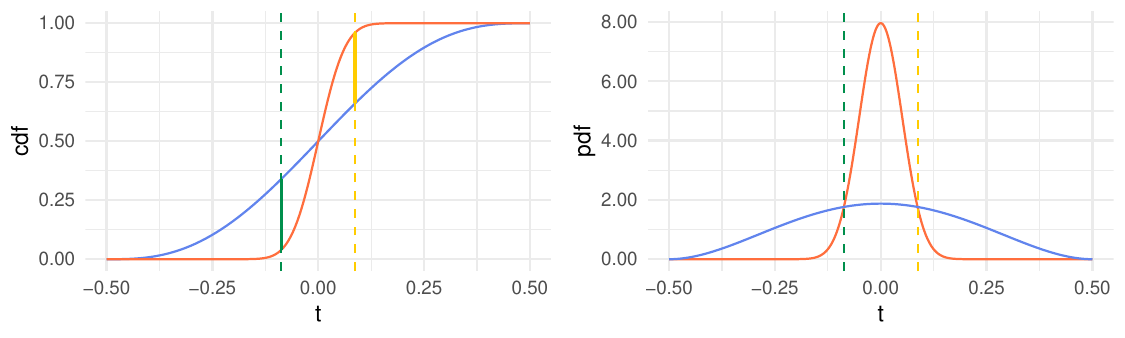}    
            {\footnotesize Visualization of Example \ref{ex.unique_optimizers}. Blue lines represent the distribution of $Y(1) - s^*(X)$, and red lines the distribution of $Y(0) - s^*(X)$. The left panel shows the cumulative distribution function and the right panel shows the probability density function. 
            The green dashed line represents the unique point $t$ where $\Delta_P(t, s^*)$ is maximized, and the solid green line in the left panel represents the magnitude of $\Delta_P(t, s^*)$ at this point. Similarly, the yellow dashed and solid lines show the unique point that minimizes $\Delta_P(t, s^*)$ and its magnitude.}
        \end{minipage}
        \caption{Example of data generating process that satisfies A\ref{as.uniqueness_t_infty}.}
        \label{fig.example_optimizer}
    \end{figure}
\end{example}

Under A\ref{as.uniqueness_t_infty}, the presence of the optimizers in Definition \ref{def.theta_hat_cf} is asymptotically irrelevant, and $(\hat{\theta}_{L}, \hat{\theta}_{U})$ behave as sample averages. 
It follows that their asymptotic distribution is $\sqrt{n}$-Gaussian, as shown in Theorem \ref{th.asymptotic_dist_simplified}, an intermediate result for establishing confidence intervals for $\theta$. 
The presence of the estimated $(\hat{s}_{L, k}, \hat{s}_{U, k})$ will only affect where $(\hat{\theta}_{L}, \hat{\theta}_{U})$ are centered at, but these target bounds will always contain $\theta$, since any $s$ induces valid bounds (\ref{eq.validity}).

\begin{theorem}[\hyperlink{th.asymptotic_dist}{Asymptotic Distribution of Cross-fitting Estimators}] \label{th.asymptotic_dist_simplified}
    Let $\mathcal{P}$ denote a set of probability measures satisfying Assumptions \ref{as.rct} and \ref{as.asymptotic}. Then, there exists $\bar{\theta}_{L, P}, \bar{\theta}_{U, P}$ with $\bar{\theta}_{L, P} \le \theta \le \bar{\theta}_{U, P}$ such that
    $$\sqrt{n} \begin{bmatrix} \hat{\theta}_{L} - \bar{\theta}_{L, P} \\ \hat{\theta}_{U} - \bar{\theta}_{U, P} \end{bmatrix} \overset{d}{\to} \mathcal{N} \left( \begin{bmatrix} 0 \\ 0 \end{bmatrix}, \begin{bmatrix} \sigma^2_{L,P} & \sigma_{L,U,P} \\ \sigma_{L,U,P} & \sigma^2_{U,P} \end{bmatrix} \right)$$
    uniformly in $P \in \mathcal{P}$.
\end{theorem}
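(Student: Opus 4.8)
The plan is to define the target bounds through the probability limits of the nuisance estimators and then reduce each estimator to an asymptotically linear statistic by successively removing the optimization over $t$ and the estimation of $s$. First I would set $\bar{\theta}_{L,P} = \max_t \Delta_P(t, s_{L,P})$ and $\bar{\theta}_{U,P} = 1 + \min_t \Delta_P(t, s_{U,P})$, where $s_{L,P}, s_{U,P}$ are the limits supplied by Assumption \ref{as.limit_of_s}. The ordering $\bar{\theta}_{L,P} \le \theta \le \bar{\theta}_{U,P}$ is immediate: since $s_{L,P}, s_{U,P}$ are fixed functions, the identity (\ref{eq.validity}) and the validity of the induced bounds (\ref{def.induced_bounds}) apply verbatim with $s = s_{L,P}$ and $s = s_{U,P}$.

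Second, I would linearize in $t$. Writing $\widehat{\Delta}_n(t) = \frac{1}{n_1}\sum_{i:D_i=1}\mathbb{I}(Y_i \le \hat{s}_{L,k(i)}(X_i)+t) - \frac{1}{n_0}\sum_{i:D_i=0}\mathbb{I}(Y_i\le \hat{s}_{L,k(i)}(X_i)+t)$, so that $\widehat{\theta}_{L} = \max_t \widehat{\Delta}_n(t)$, the goal is to show $\widehat{\theta}_{L} = \widehat{\Delta}_n(t_{max,P}) + o_P(n^{-1/2})$ uniformly in $P$. This is a standard argmax/envelope argument: equicontinuity of the outcome cdfs (A\ref{as.cont_outcomes}) yields stochastic equicontinuity of the process $t\mapsto\widehat{\Delta}_n(t)$, and the uniform well-separation of the optimizer (A\ref{as.uniqueness_t_infty}) forces the empirical argmax toward $t_{max,P}$; because $t_{max,P}$ maximizes the deterministic limit, the first-order contribution of displacing the argmax cancels, leaving only the value at the fixed point $t_{max,P}$. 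The analogous argument applied to the inner minimization gives the corresponding representation for $\widehat{\theta}_{U}$ at $t_{min,P}$.

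Third --- and this is the crux --- I would remove the estimated $\hat{s}$. Cross-fitting is essential here: for $i\in\mathcal{I}_k$ the function $\hat{s}_{L,k}$ is independent of the fold-$k$ data, so conditionally on the out-of-fold sample the summands in $\widehat{\Delta}_n(t_{max,P})$ are i.i.d. I would decompose $\widehat{\Delta}_n(t_{max,P})$ into its conditional mean $\Delta_P(t_{max,P},\hat{s}_{L,k})$ and a centered empirical-process remainder. By equicontinuity and $\hat{s}_{L,k}\to s_{L,P}$ (A\ref{as.limit_of_s}), the remainder behaves to first order like the same empirical process evaluated at the fixed limit $s_{L,P}$ and supplies the Gaussian fluctuation. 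The conditional-mean term must then be shown to differ from $\bar{\theta}_{L,P}=\Delta_P(t_{max,P},s_{L,P})$ by $o_P(n^{-1/2})$; here the envelope structure is used a second time, because $\max_t\Delta_P(t,\cdot)$ is invariant to constant shifts of its argument and $t_{max,P}$ is an interior maximizer, so the first-order effect of replacing $s_{L,P}$ by $\hat{s}_{L,k}$ reduces to an inner product of the conditional density gap at $t_{max,P}$ with the estimation error $\hat{s}_{L,k}-s_{L,P}$. I expect this to be the main obstacle: showing that the slow, and possibly inconsistent (for the sharp $s^*$), estimation of $s$ perturbs only the centering constant $\bar{\theta}_{L,P}$ and not the $\sqrt{n}$-scaled fluctuation. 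It is exactly where cross-fitting (to decouple $\hat{s}$ from the evaluation fold) and the envelope/orthogonality property at the optimizer (to kill the first-order bias) must be combined carefully and made uniform in $P$.

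Finally, I would assemble the asymptotically linear representations for $\widehat{\theta}_{L}$ and $\widehat{\theta}_{U}$ --- each a difference of a treatment-sample average and a control-sample average of bounded indicators at the fixed thresholds $t_{max,P}$ and $t_{min,P}$ --- and apply a Lindeberg--Feller CLT holding uniformly over $\mathcal{P}$, using the bounded, nondegenerate summand variance guaranteed by A\ref{as.bounded_variance} together with the bounded-away-from-zero proportion of treated in A\ref{as.unconfoundedness}. Joint convergence follows from the Cram\'er--Wold device; the off-diagonal entry $\sigma_{L,U,P}$ arises because the two statistics are built from the same treatment and control observations, merely evaluated at different thresholds. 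Uniformity throughout is inherited from the uniform-in-$P$ versions of equicontinuity, separation, and variance bounds collected in Assumption \ref{as.asymptotic}.
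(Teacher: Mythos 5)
Your overall architecture (fix $t$ by an argmax argument, decouple $\hat{s}$ by cross-fitting, conclude with a Lindeberg--Feller CLT and Cram\'er--Wold) matches the paper's proof of Theorem \ref{th.asymptotic_dist}, and you correctly flag the treatment of the conditional-mean term as the crux. But your resolution of that crux has a genuine gap, and it stems from your choice of centering. You take $\bar{\theta}_{L,P} = \max_t \Delta_P(t, s_{L,P})$, a \emph{deterministic} constant built from the probability limit $s_{L,P}$, and you then need $\sqrt{n}\left(\Delta_P(t_{max,P}, \hat{s}_{L,k}) - \Delta_P(t_{max,P}, s_{L,P})\right) = o_P(1)$. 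Assumption \ref{as.limit_of_s} gives only $|\hat{s}_{L,k}(X) - s_{L,P}(X)| \overset{P}{\to} 0$ at an \emph{arbitrary} rate, so this difference can vanish more slowly than $n^{-1/2}$. The orthogonality you invoke to kill the first-order term does not apply here: $s_{L,P}$ is an arbitrary limit, not necessarily the sharp $s^*_{L,P}$, so the map $s \mapsto \Delta_P(t_{max,P}, s)$ has no stationarity at $s_{L,P}$ and the perturbation is generically first order in $\hat{s}_{L,k} - s_{L,P}$. (The envelope property in $t$ is fine; the one in $s$ is not, because $\hat{s}_{L,k}$ is not an argmax of anything tied to $\Delta_P$.) Even at the sharp $s^*_{L,P}$, where the pointwise first derivative does vanish, controlling the remaining quadratic term at the $o_P(n^{-1/2})$ level requires exactly the rate condition $\sqrt{n}\,\mathbb{E}_P[(\hat{s}_{L,k}(X)-s^*_{L,P}(X))^2\,|\,\hat{s}_{L,k}] \overset{P}{\to} 0$ --- which the paper imposes only in Theorem \ref{th.exactness}, not here. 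So with your deterministic centering the statistic carries a drift that can diverge after $\sqrt{n}$-scaling.

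The paper sidesteps this entirely by letting the centering be \emph{data-dependent}: it sets $\bar{\theta}_{L,P} = \frac{1}{K}\sum_{k} \left[ F_{P,1}(\hat{t}_L, \hat{s}_{L,k}) - F_{P,0}(\hat{t}_L, \hat{s}_{L,k}) \right]$, i.e.\ the exact conditional mean of the estimator given the out-of-fold nuisances. Then $\widehat{\theta}_L - \bar{\theta}_{L,P}$ is a centered empirical process evaluated at $(\hat{t}_L, \hat{s}_{L,k})$, and stochastic equicontinuity plus A\ref{as.cont_outcomes}, A\ref{as.limit_of_s} and A\ref{as.uniqueness_t_infty} let one replace $(\hat{t}_L,\hat{s}_{L,k})$ by $(t_{max,P}, s_{L,P})$ \emph{inside the centered fluctuation only}, which is all that is needed for the CLT; no rate on $\hat{s}$ is ever used. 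The validity $\bar{\theta}_{L,P} \le \theta \le \bar{\theta}_{U,P}$ still follows from (\ref{def.induced_bounds}) applied conditionally on each $\hat{s}_{A,k}$. To repair your proof, replace your deterministic $\bar{\theta}_{A,P}$ with these random conditional means; the rest of your argument (argmax step, cross-fitting decoupling, uniform CLT via A\ref{as.bounded_variance} and A\ref{as.unconfoundedness}) then goes through essentially as you describe.
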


Definitions of $\sigma^2_{L,P}$, $\sigma^2_{U,P}$ and $\sigma_{L,U,P}$ are provided in Appendix \ref{appendix.definitions}.  
The cross-fitting estimators are asymptotically normal at the usual $\sqrt{n}$ rate, centered at bounds $[\bar{\theta}_{L, P}, \bar{\theta}_{U, P}]$ that may be wider than the sharp $[\theta_L^*, \theta_U^*]$, but which always contain $\theta$. 
The nonparametric rates of convergence of $\hat{s}_L, \hat{s}_U$ only affect the target bounds $[\bar{\theta}_{L, P}, \bar{\theta}_{U, P} ]$, which always contain $\theta$. 
The resulting $\sqrt{n}$ rate is made possible through the splitting of the sample. 
Once conditioning on the estimated $(\hat{s}_{L, k}, \hat{s}_{U, k})$ and under A\ref{as.uniqueness_t_infty}, $(\hat{\theta}_{L}, \hat{\theta}_{U})$ behave as sample averages. 
Conditions under which $(\bar{\theta}_{L, P}, \bar{\theta}_{U, P}) = (\theta_{L,P}^*, \theta_{U,P}^*)$ are given in Theorem \ref{th.exactness}.

Confidence intervals for $\theta$ follow from consistent sample-analogue estimators for $\sigma_{L}^2$, $\sigma_{U}^2$ and $\sigma_{L,U}$. 
I provide expressions in Appendix \ref{appendix.definitions}. 
One-sided confidence intervals follow directly from the normal approximation of Theorem \ref{th.asymptotic_dist_simplified}. 
Since $\theta \in [0,1]$, the one-sided CIs are particularly useful for testing $\theta=0$ or $\theta=1$. 
For two-sided inference on $\theta$, I suggest using the confidence interval $\text{CI}^3_\alpha$ proposed in \citet{stoye2009more}, here denoted by $\hat{CI}_\alpha$. 
A precise definition is reproduced in Online Appendix \ref{appendix.stoye}. 

\begin{theorem}[\hyperlink{proof.th.cis}{Uniform Validity of Confidence Intervals}] \label{th.cis}
    Let $\mathcal{P}$ denote a set of probability measures satisfying Assumptions \ref{as.rct} and \ref{as.asymptotic}, and denote by $\Theta(P) = [\theta_{L,P}^*, \theta_{U,P}^*]$ the sharp identified set for $\theta$ when $(Y, D, X) \sim P \in \mathcal{P}$. Then,
    \begin{align*}
        & \liminf_{n \to \infty} \inf_{P \in \mathcal{P}} \inf_{\theta \in \Theta(P)} P \left( \theta \ge \hat{\theta}_{L} - z_\alpha \frac{\hat{\sigma}_{L}}{\sqrt{n}} \right) \ge 1 - \alpha, \\
        & \liminf_{n \to \infty} \inf_{P \in \mathcal{P}} \inf_{\theta \in \Theta(P)} P \left( \theta \le \hat{\theta}_{U} + z_\alpha \frac{\hat{\sigma}_{U}}{\sqrt{n}} \right) \ge 1 - \alpha, \\
        & \liminf_{n \to \infty} \inf_{P \in \mathcal{P}} \inf_{\theta \in \Theta(P)} P \left( \theta \in \hat{CI}_\alpha \right) \ge 1 - \alpha,
    \end{align*}
where $z_\alpha = \Phi^{-1}(1 - \alpha)$, the $(1-\alpha)$-th quantile of the standard normal distribution.
\end{theorem}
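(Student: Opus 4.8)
The plan is to reduce all three claims to the uniform asymptotic normality of Theorem \ref{th.asymptotic_dist_simplified}, exploiting that the centering bounds contain the whole sharp identified set. The limiting transformations $s_{L,P}, s_{U,P}$ of A\ref{as.limit_of_s} are themselves valid covariate-adjustment terms, so by (\ref{eq.validity}) and (\ref{def.induced_bounds}) the induced bounds contain every value of $\theta$ consistent with the observed distribution; hence $\Theta(P) \subseteq [\bar{\theta}_{L,P}, \bar{\theta}_{U,P}]$ and in particular $\bar{\theta}_{L,P} \le \theta \le \bar{\theta}_{U,P}$ for every $\theta \in \Theta(P)$. This inclusion, combined with studentization, will deliver coverage.

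Before the coverage arguments I would establish uniform consistency of the variance estimators, $\sup_{P \in \mathcal{P}} P(|\widehat{\sigma}_{L}^2 - \sigma^2_{L,P}| > \varepsilon) \to 0$ and likewise for $\widehat{\sigma}_{U}^2$ and $\widehat{\sigma}_{L,U}$. Since the sample analogues in Appendix \ref{appendix.definitions} are smooth functions of empirical cdfs evaluated at $\hat{s}_{\cdot, k(i)}(X_i)$ and the estimated optimizers, this follows from a uniform law of large numbers, the equicontinuity of A\ref{as.cont_outcomes}, the convergence of $\hat{s}$ in A\ref{as.limit_of_s}, and the uniqueness of optimizers in A\ref{as.uniqueness_t_infty}; the variance lower bound A\ref{as.bounded_variance} keeps the studentized statistics non-degenerate.

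For the one-sided lower interval, the inequality $\theta \ge \bar{\theta}_{L,P}$ gives the event inclusion $\{ \bar{\theta}_{L,P} \ge \widehat{\theta}_{L} - z_\alpha \widehat{\sigma}_{L}/\sqrt{n} \} \subseteq \{ \theta \ge \widehat{\theta}_{L} - z_\alpha \widehat{\sigma}_{L}/\sqrt{n} \}$, so it suffices to bound the probability of the left event, which equals $P ( \sqrt{n}(\widehat{\theta}_{L} - \bar{\theta}_{L,P})/\widehat{\sigma}_{L} \le z_\alpha )$. By Theorem \ref{th.asymptotic_dist_simplified} and uniform Slutsky, the studentized statistic converges in distribution to $\mathcal{N}(0,1)$ uniformly in $P$; since the limit cdf is continuous, the convergence is uniform in its argument, so this probability tends to $\Phi(z_\alpha) = 1 - \alpha$ uniformly in $P$. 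Taking $\liminf_{n} \inf_{P} \inf_{\theta \in \Theta(P)}$ yields the first claim, and the upper interval is symmetric.

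The main obstacle is the two-sided interval. Here I would note that $(\widehat{\theta}_{L}, \widehat{\theta}_{U}, \widehat{\sigma}_{L}, \widehat{\sigma}_{U}, \widehat{\sigma}_{L,U})$ meet the hypotheses of the $\text{CI}^3_\alpha$ construction of \citet{stoye2009more}: joint asymptotic normality of the endpoint estimators (Theorem \ref{th.asymptotic_dist_simplified}), uniformly consistent estimation of the limiting covariance matrix (previous step), and non-degenerate variances bounded away from zero (A\ref{as.bounded_variance}). Invoking his uniform coverage result gives $\liminf_{n} \inf_{P} \inf_{t \in [\bar{\theta}_{L,P}, \bar{\theta}_{U,P}]} P ( t \in \widehat{CI}_\alpha ) \ge 1 - \alpha$. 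The delicate point is that the estimators are centered at the possibly-wider bounds rather than the sharp ones; the resolution is that only the location of the target interval changes, while the joint-normal limit and consistent covariance that drive Stoye's argument are unaffected, so his interval covers every point of $[\bar{\theta}_{L,P}, \bar{\theta}_{U,P}]$. Since $\Theta(P) \subseteq [\bar{\theta}_{L,P}, \bar{\theta}_{U,P}]$, every $\theta \in \Theta(P)$ is covered, which is exactly the third claim.
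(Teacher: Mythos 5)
Your proposal is correct and follows essentially the same route as the paper: establish the inclusion $\Theta(P) \subseteq [\bar{\theta}_{L,P}, \bar{\theta}_{U,P}]$ from the validity of bounds induced by any covariate-adjustment term, prove uniform consistency of $(\widehat{\sigma}_{L}, \widehat{\sigma}_{U}, \widehat{\sigma}_{L,U})$, deduce the one-sided claims from Theorem \ref{th.asymptotic_dist_simplified} plus Slutsky, and obtain the two-sided claim by applying the coverage result of \citet{stoye2009more} centered at the outer bounds, with sharpness-loss turning the equality into an inequality. The only cosmetic difference is that you describe the centering bounds as induced by the limits $s_{L,P}, s_{U,P}$ whereas the paper centers at bounds induced by the estimated $\hat{s}_{A,k}$ and $\hat{t}_{A}$; either way the inclusion holds because any $s$ induces valid bounds, so nothing is affected.
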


Since the cross-fitting estimators depend on the random partition of the sample into $K$ folds, the confidence intervals in Theorem \ref{th.cis} may lead to different conclusions in terms of statistical significance across runs. 
To address this, the cross-fitting procedure can be repeated several times with independent random splits, and the results averaged across repetitions. 
\citet{favaTrainingTestingMultiple2025} show that asymptotic normality is preserved under such repeated splitting, and a valid standard error can be obtained by averaging the standard errors across repetitions.

Finally, I show that the target bounds $(\bar{\theta}_{L, P}, \bar{\theta}_{U, P})$ converge to the sharp $(\theta^*_{L, P}, \theta^*_{U, P})$ if the estimators of the covariate adjustment terms are consistent to the optimal at any rate. 
Moreover, I show that the confidence intervals are asymptotically exact under a smoothness and rate condition. 
\begin{theorem}[\hyperlink{proof.th.exactness}{Asymptotic exactness of confidence intervals}] \label{th.exactness}
    Let $\mathcal{P}$ denote a set of probability measures satisfying Assumptions \ref{as.rct} and \ref{as.asymptotic}. Then, uniformly in $P \in \mathcal{P}$, $\bar{\theta}_{L, P} - \theta_{L, P}^* \overset{P}{\longrightarrow} 0$ and $\bar{\theta}_{U, P} - \theta_{U, P}^* \overset{P}{\longrightarrow} 0$ if respectively $s_{L,P}(x) = s^*_{L,P}(x)$ and $s_{U,P}(x) = s^*_{U,P}(x)$. Moreover, if
    \begin{itemize}
        \item[(i)] The CDFs $P( Y(j) \le t | X = x)$ for $j=0,1$ are twice continuously differentiable in $t$ with second derivative bounded uniformly in $P \in \mathcal{P}$ and $x \in \mathcal{X}$;
        \item[(ii)] Uniformly in $P \in \mathcal{P}$,
        $\sqrt{n} \mathbb{E}_{P} \left[ \left( \hat{s}_{L,k}(X) - s^*_{L, P}(X) \right)^2 | \hat{s}_{L,k} \right] \overset{P}{\longrightarrow} 0$ and 
        
        $\sqrt{n} \mathbb{E}_{P} \left[ \left( \hat{s}_{U,k}(X) - s^*_{U, P}(X) \right)^2 | \hat{s}_{U,k} \right] \overset{P}{\longrightarrow} 0;$
    \end{itemize}
        then, $\sqrt{n}(\bar{\theta}_{L, P} - \theta_{L,P}^*) \overset{P}{\longrightarrow} 0$ and $\sqrt{n}(\bar{\theta}_{U, P} - \theta_{U,P}^*) \overset{P}{\longrightarrow} 0$ uniformly in $P \in \mathcal{P}$. 
\end{theorem}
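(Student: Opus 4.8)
The plan is to work with the conditional (on the fitted separators) population bounds; fix a representative fold and write $\hat{s}_{L} \equiv \hat{s}_{L,k}$, $\eta_L = \hat{s}_{L} - s^*_{L,P}$, and $h_P(u,x) = F_1(u \mid x) - F_0(u \mid x)$, so that $\Delta_P(t,s) = \mathbb{E}_P[h_P(s(X)+t, X)]$. Recall that the sharp lower bound is $\theta^*_{L,P} = \mathbb{E}_P[\max_u h_P(u, X)] = \mathbb{E}_P[h_P(s^*_{L,P}(X), X)]$, while the centering in Theorem~\ref{th.asymptotic_dist_simplified} is the bound induced by the fitted separator, $\bar{\theta}_{L,P} = \max_t \Delta_P(t, \hat{s}_{L})$ (the multi-fold case follows by averaging). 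The engine of the whole proof is a two-sided sandwich. For the upper side, for \emph{every} $s$ and $t$ one has $\Delta_P(t, s) = \mathbb{E}_P[h_P(s(X) + t, X)] \le \mathbb{E}_P[\max_u h_P(u, X)] = \theta^*_{L,P}$, whence $\bar{\theta}_{L,P} \le \theta^*_{L,P}$ (this is merely the sharpness of $\theta^*_{L,P}$: any separator yields a valid, hence weakly smaller, lower bound). For the lower side, evaluating at $t = 0$ gives $\bar{\theta}_{L,P} \ge \Delta_P(0, \hat{s}_{L}) = \mathbb{E}_P[h_P(\hat{s}_{L}(X), X)]$. Combining,
\begin{equation*}
    0 \le \theta^*_{L,P} - \bar{\theta}_{L,P} \le \mathbb{E}_P\big[\, h_P(s^*_{L,P}(X), X) - h_P(s^*_{L,P}(X) + \eta_L(X), X) \,\big].
\end{equation*}

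For the first (consistency) claim I would show the right-hand side vanishes in probability when $s_{L,P} = s^*_{L,P}$. By A\ref{as.limit_of_s}, $\eta_L(X) \overset{P}{\to} 0$; the integrand is bounded (it lies in $[-1,1]$), and the equicontinuity in A\ref{as.cont_outcomes} gives $h_P(s^*_{L,P}(X) + \eta_L(X), X) - h_P(s^*_{L,P}(X), X) \overset{P}{\to} 0$ uniformly over $P \in \mathcal{P}$. Dominated convergence then yields $\theta^*_{L,P} - \bar{\theta}_{L,P} \overset{P}{\to} 0$ uniformly, and the upper-bound case is symmetric (replace $\max$ by $\min$ and evaluate the reverse side at $t=0$).

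For the rate claim, the key observation is that the first-order effect of the estimation error cancels because $s^*_{L,P}(x)$ is a pointwise maximizer of $u \mapsto h_P(u, x)$. Under condition~(i), $h_P(\cdot, x)$ is twice continuously differentiable with $\partial_u^2 h_P$ bounded by some $M$ uniformly in $P$ and $x$; at an interior maximizer the first-order condition $\partial_u h_P(s^*_{L,P}(x), x) = 0$ holds, so a second-order Taylor expansion yields, pointwise in $x$,
\begin{equation*}
    0 \le h_P(s^*_{L,P}(x), x) - h_P(s^*_{L,P}(x) + \eta_L(x), x) \le \tfrac{M}{2}\, \eta_L(x)^2 .
\end{equation*}
Taking conditional expectations and inserting into the sandwich gives $0 \le \theta^*_{L,P} - \bar{\theta}_{L,P} \le \tfrac{M}{2}\, \mathbb{E}_P[\,\eta_L(X)^2 \mid \hat{s}_{L}\,]$; multiplying by $\sqrt{n}$ and invoking condition~(ii), namely $\sqrt{n}\, \mathbb{E}_P[\,\eta_L(X)^2 \mid \hat{s}_{L}\,] \overset{P}{\to} 0$, delivers $\sqrt{n}(\bar{\theta}_{L,P} - \theta^*_{L,P}) \overset{P}{\to} 0$ uniformly in $P$, with the upper bound analogous. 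This step is where the argument is most delicate: one must guarantee the pointwise optimizers are interior so the linear-in-$\eta_L$ term truly vanishes (otherwise it would be only $O_P(n^{-1/4})$ and the rate would fail), which is where boundedness of $\mathcal{Y}$ in A\ref{as.bounded_y} (forcing $h_P \to 0$ at $\pm\infty$, hence interior maxima whenever the contribution is nondegenerate) and the uniqueness in A\ref{as.uniqueness_t_infty} enter; the degenerate case $\max_u h_P(\cdot, x) = 0$ is harmless, as both $h_P$ terms are then pinned at the common maximal value.

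Finally, to obtain exactness I would decompose, for the one-sided interval, $\sqrt{n}(\widehat{\theta}_{L} - \theta) = \sqrt{n}(\widehat{\theta}_{L} - \bar{\theta}_{L,P}) + \sqrt{n}(\bar{\theta}_{L,P} - \theta^*_{L,P}) + \sqrt{n}(\theta^*_{L,P} - \theta)$. By Theorem~\ref{th.asymptotic_dist_simplified} the first term is asymptotically $\mathcal{N}(0, \sigma^2_{L,P})$, the second is $o_P(1)$ by the rate claim just established, and the third is weakly negative because $\theta \ge \theta^*_{L,P}$ on $\Theta(P)$, with equality exactly at the boundary point $\theta = \theta^*_{L,P}$. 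Hence the coverage $P(\theta \ge \widehat{\theta}_{L} - z_\alpha \widehat{\sigma}_{L}/\sqrt{n})$ converges to $1 - \alpha$ at $\theta = \theta^*_{L,P}$ and exceeds it elsewhere, so the infimum over $\theta \in \Theta(P)$ is attained at the boundary and equals $1 - \alpha$, turning the first inequality of Theorem~\ref{th.cis} into an equality; the upper one-sided interval is symmetric. For the two-sided $\widehat{CI}_\alpha$, the same centering alignment makes its endpoints coincide with the sharp $[\theta^*_{L,P}, \theta^*_{U,P}]$ up to $o_P(n^{-1/2})$, and exactness then follows from the construction of $\mathrm{CI}^3_\alpha$ in \citet{stoye2009more}, which attains its nominal level at the boundary of the identified interval.
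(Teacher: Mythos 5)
Your proof follows essentially the same route as the paper's: the two-sided sandwich $\Delta_P(0,\hat s_{L,k}) \le \bar\theta_{L,P} \le \theta^*_{L,P}$, consistency via equicontinuity (A\ref{as.cont_outcomes}) together with $\hat s_{L,k}\to s^*_{L,P}$, and the $\sqrt n$-rate via a second-order Taylor expansion in which the linear term vanishes because $s^*_{L,P}(x)$ pointwise maximizes $F_1(\cdot\mid x)-F_0(\cdot\mid x)$, followed by an appeal to condition (ii). The only (harmless) discrepancy is that you take the centering to be $\max_t\Delta_P(t,\hat s_{L})$, whereas the paper's $\bar\theta_{L,P}$ is the fold-average of $\Delta_P(\hat t_{L},\hat s_{L,k})$ with $\hat t_{L}$ the \emph{empirical} maximizer; the identical sandwich applies to that quantity as well, and your explicit discussion of interiority of the pointwise maximizer is a point the paper leaves implicit.
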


Note that a simple sufficient condition for (ii) in Theorem \ref{th.exactness} is that $|\hat{s}_{L,k}(X) - s^*_{L, P}(X)|$ converges to zero at the semiparametric rate $o_{P}(n^{-1/4})$. 

\section{Monte Carlo Simulations} \label{section.simulation}

I illustrate the results of Theorems \ref{th.fs_cis} and \ref{th.asymptotic_dist_simplified} with a simulation study. 
I also compare the performance of the sample-splitting and cross-fitting estimators, as well as the estimator proposed by \citet{semenova2025debiased} and \citet{ji2024model} (SJLS). 
Note that the estimators of \citet{semenova2025debiased} and \citet{ji2024model} coincide for the DTE, and that CAIDE by construction estimates bounds that are always (weakly) narrower than SJLS (Appendix \ref{appendix.comparison}).

For $n \in \{500, 2000\}$, I simulate $10,000$ times from the following data-generating process (DGP):
\begin{itemize}
    \item $d = 20$ covariates:
    \begin{itemize}[label=$\circ$]
      \item $X \sim \mathcal{N}(0, \Sigma)$, $\Sigma_{ij} = 0$ if one of $i$ or $j \in \{1,2\}$, $\Sigma_{ij} = 0.5^{|i-j|}$ otherwise.
    \end{itemize}
    \item Potential outcomes:
    \begin{itemize}[label=$\circ$]
      \item $\beta_0 = (3^{1}, 3^{0}, 0, 0, \dots, 0, 3^{-1}, 3^{-2}, \dots, 3^{-6})$, $\Theta_{0, ij} = 0.2 (-1)^{i+j}$;
      \item $\beta_\tau = (1, 1, \dots, 1)$, $\Theta_{\tau, ij} = 0.2$;
      \item $Y_i(0) = X'_i \beta_0 + X'_i \Theta_0 X_i$;
      \item $Y_i(1) = Y_i(0) -1 + X'_i \beta_{\tau} + X'_i \Theta_{\tau} X_i$;
      \item $Y_i = D_i Y_i(1) + (1 - D_i) Y_i(0)$, $D_i \sim \text{Bernoulli}(0.5)$.
    \end{itemize}
\end{itemize}

For this DGP, the true value is $\theta_0 \approx 0.43$ (computed numerically). 
I fit with $K=5$ and $\delta=0$ the models Quantile Neural Networks, Support Vector Machine (Regression), and No Covariates ($s(x)=0$). 
I also calculate an Oracle model that uses the sharp transformations defined in Theorem \ref{th.best_s}, which are generally unknown if the DGP is unknown. 
Technical details are deferred to Appendix \ref{appendix.ml}. 
I consider two scenarios: (i) $p=10$, where only covariates $X_1$ through $X_{10}$ are observed; (ii) $p=20$, where all $20$ covariates are observed. 
Note that when $p=20$, $\theta$ is point identified since potential outcomes are uniquely determined by all $20$ covariates. 

\begin{figure}[ht]
    \centering
    \begin{minipage}{0.95\textwidth}
        \includegraphics[width=\linewidth]{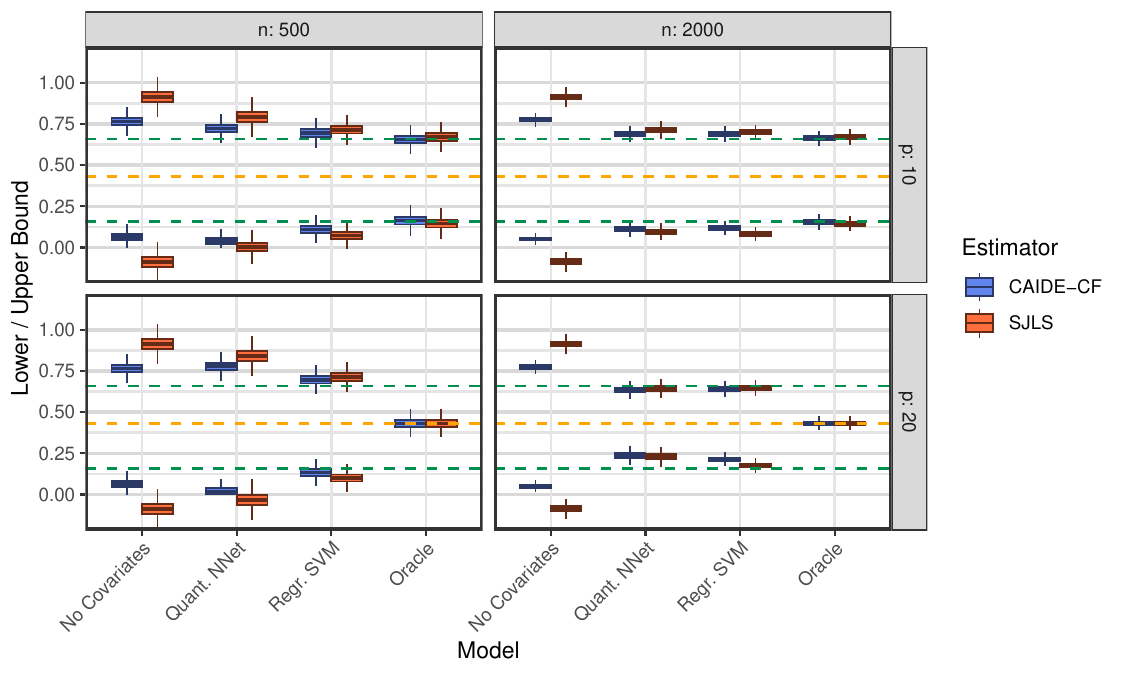}    
        {\footnotesize \textbf{Note:} Green dashed lines represent the sharp identified set of $\theta$ when $p=10$, and orange dashed lines represent the true value of $\theta$. 
        Boxplots consist of the median, two hinges at the first and third quartiles, and whiskers extending to the most extreme data points no further than 1.5 times the interquartile range (third minus first quartile) from the hinge.}
    \end{minipage}
    \caption{Distribution of Estimators for Lower and Upper Bounds on $\theta$}
    \label{fig.sim}
\end{figure}

\begin{table}[ht]
    \caption{Measures of power and coverage for lower bound on $\theta$}
    \label{tab.sim}
    \centering
    \begin{tabular}{cccccccc}
        \toprule
        &  & \multicolumn{3}{c}{p = 10} & \multicolumn{3}{c}{p = 20} \\ 
        \cmidrule(lr){3-5} \cmidrule(lr){6-8}
        n & Estimator & $\theta = 0$ & $\theta \ge \theta_0$ & Avg Length & $\theta = 0$ & $\theta \ge \theta_0$ & Avg Length \\ 
        \midrule
        \multicolumn{8}{l}{No Covariates} \\ 
        \midrule
        $500$ & Sample Split. & $0.000$ & $0.000$ & $0.998$ & $0.000$ & $0.000$ & $0.998$ \\ 
        $500$ & Cross Fit. & $0.722$ & $0.000$ & $0.802$ & $0.722$ & $0.000$ & $0.802$ \\ 
        $500$ & SJLS & $0.000$ & $0.000$ & $1.000$ & $0.000$ & $0.000$ & $1.000$ \\ 
        $2,000$ & Sample Split. & $0.001$ & $0.000$ & $0.892$ & $0.001$ & $0.000$ & $0.892$ \\ 
        $2,000$ & Cross Fit. & $0.983$ & $0.000$ & $0.776$ & $0.983$ & $0.000$ & $0.776$ \\ 
        $2,000$ & SJLS & $0.000$ & $0.000$ & $0.965$ & $0.000$ & $0.000$ & $0.965$ \\ 
        \midrule
        \multicolumn{8}{l}{Regr. SVM} \\ 
        \midrule
        $500$ & Sample Split. & $0.004$ & $0.000$ & $0.935$ & $0.007$ & $0.000$ & $0.948$ \\ 
        $500$ & Cross Fit. & $0.970$ & $0.000$ & $0.698$ & $0.998$ & $0.000$ & $0.676$ \\ 
        $500$ & SJLS & $0.806$ & $0.000$ & $0.815$ & $0.964$ & $0.000$ & $0.791$ \\ 
        $2,000$ & Sample Split. & $0.414$ & $0.000$ & $0.804$ & $0.994$ & $0.000$ & $0.716$ \\ 
        $2,000$ & Cross Fit. & $1.000$ & $0.000$ & $0.628$ & $1.000$ & $0.000$ & $0.485$ \\ 
        $2,000$ & SJLS & $1.000$ & $0.000$ & $0.707$ & $1.000$ & $0.000$ & $0.562$ \\ 
        \midrule
        \multicolumn{8}{l}{Oracle} \\ 
        \midrule
        $500$ & Cross Fit. & $0.999$ & $0.000$ & $0.608$ & $1.000$ & $0.051$ & $0.103$ \\ 
        $2,000$ & Cross Fit. & $1.000$ & $0.000$ & $0.568$ & $1.000$ & $0.049$ & $0.052$ \\ 
        \bottomrule
    \end{tabular}
\end{table}

Figure \ref{fig.sim} shows the distribution of the estimators CAIDE-CF (Cross Fitting) and SJLS for the lower and upper bounds on $\theta$, for $n=500,2000$ and scenarios $p=10,20$. 
It illustrates that the estimators are centered around bounds determined by the different models for estimating $\hat{s}$. 
Some models yield substantially narrower bounds than others. 
For example, in most cases, the Regr. SVM model led to the narrowest bounds, close to the sharp bounds in the $p=10$ case. 
A larger sample size leads to smaller variation in the estimators. Observing all $p=20$ covariates, instead of $p=10$, can lead to narrower bounds (e.g., Regr. SVM and $n=2000$) but also to wider bounds if the sample is smaller (Quant. NNet and $n=500$). 
Throughout, the estimates of CAIDE are narrower than those of SJLS.

In Table \ref{tab.sim}, I show the probability of rejecting the hypotheses $H_0: \theta = 0$ and $H_0: \theta \ge \theta_0$, using the one-sided confidence interval based on the lower bound in Theorem \ref{th.cis}. 
Here, $\theta_0 \approx 0.43$ is the true value of $\theta$ for this DGP. 
A high rejection rate for the first hypothesis indicates power since $0$ is not in the sharp identified set of $\theta$. 
The rejection rate of the second hypothesis indicates size, and it should ideally be less than or equal to the nominal $0.05$. 
I also measure the average length of the confidence intervals, defined as the distance between the lower and upper bounds calculated from the one-sided CIs in Theorem \ref{th.cis}. 
I compare $n=500,2000$ and $p=10,20$ for different models and estimators. 
CAIDE-SS is the sample-splitting estimator of Section \ref{section.finite_sample}, and CAIDE-CF is the cross-fitting estimator of Section \ref{section.asymptotics}. 
The table shows that including covariates, having a larger sample size, or observing all $20$ covariates increases power while preserving size. 
The sample-splitting estimators are conservative and exhibit the largest average length for Regr. SVM. 
Nonetheless, it retains meaningful power. 
For example, it excludes zero in $99.4\%$ of the cases when $p=20$, $n=2000$, and the model is Regr. SVM. 
The cross-fitting estimator performs better than SJLS in terms of smaller confidence intervals while preserving coverage. 
It achieves the nominal coverage rate for the oracle model when $p=20$. 

\section{Application to Microcredit} \label{section.application}

In 2022, microcredit reached more than 170 million borrowers worldwide, with a total loan portfolio of over \$140 billion \citep{Convergences2023ImpactFinance}. 
Theory suggests that microborrowing can be beneficial or harmful (\citealp{banerjee2013microcredit}, \citealp{garz2021consumer}), and evidence suggests that average effects are small if not negligible (\citealp{banerjee2015six}, \citealp{meager2019understanding}), and analyses of heterogeneity suggested mainly positive impacts in the upper tail of income and profits distributions (e.g., \citealp{meager2022aggregating}). 

Yet, statistically significant evidence of negative effects has been limited.
\cite{vivaltHowMuchCan2020}, for example, mentions only one study with a statistically significant negative effect across all outcomes considered: one study for profits, and zero studies for each of assets, consumption, probability of owning a business, savings, and total income (Online Appendix Table B.1).
\citet{attanasio2015impacts} finds a negative average treatment effect on business profits that is statistically significant at the 10\% level, but not if corrected for multiple-hypothesis testing. 
Other works document negative quantile treatment effects (QTE), for example \citet{crepon2015estimating} on business profits at the 10th percentile\footnote{\citet{crepon2015estimating} interpret the finding as a ``reduced form'' that may not capture the heterogeneity of credit access itself, since take-up is low (randomization was conducted at the village level) and ``we do not know where the compliers lie in the distribution of outcomes''. They also suggest the negative effect ``might be partially due to long-term investments misclassified as current expenses'', supported by a positive QTE on consumption at the same percentile.}  and \citet{banerjee2015miracle} at the median at the 10\% significance level. 
However, the QTE only recovers the distribution of treatment effects under the strong assumption of rank-preservation, which rules out, for example, that two individuals with the same value of $Y(0)$ may have different values of $Y(1)$. 
Otherwise, QTEs simply recover the difference in quantiles between the distributions of $Y(1)$ and $Y(0)$ \citep{firpo2007efficient}, which may or may not have a causal interpretation. 

I revisit five studies that randomized microcredit at the individual level, and find evidence of both positive and negative treatment effects. 
I focus on individual-level randomization because it yields higher take-up rates \citep{banerjee2015six}, which facilitates prediction of treatment effects.
Four of the five studies randomized being offered a loan, while the Egypt study \citep{bryan2024big} randomized the loan amount, with the treated group offered twice as much as the control group.
In both cases, the treatment represents an increase in credit access, since individuals often borrow from multiple sources simultaneously.
Table \ref{tab.datasets} presents a brief description of the datasets. 
An important note is that my goal is not to conduct a meta-analysis or draw general conclusions about microcredit's effects across contexts, but to illustrate CAIDE's potential to detect important distributional effects—including statistically significant negative impacts—that existing analyses have not previously documented. 
I estimate $\theta$ using household income as the outcome, where $\theta$ is the fraction of individuals worse off from being offered a microloan (or a larger loan in Egypt).
I focus the analysis on household income, as it aggregates earnings across all income sources and is therefore less sensitive to reallocation of labor across activities. 
I report results for business profits in Online Appendix \ref{appendix.additional_tables}.

CAIDE provides informative bounds even in this challenging context. 
First, the setting is challenging because the distributions of income/profits are similar between the treated and control groups. 
This similarity is reflected in not statistically significant ATEs in Table \ref{tab.datasets}, and estimated $[\theta_L, \theta_U]$ close to $[0,1]$ if covariates are not considered (Table \ref{tab.hhinc}). 
Second, because the sample sizes are relatively small (ranging from 548 to 1964), with a large number of covariates (from 24 to 97). 
Finally, because predicting business outcomes is itself a difficult task \citep{mckenzie2019predicting}.

\begin{table}[ht]
\centering
\begin{threeparttable}
\caption{Description of datasets in microcredit}
\label{tab.datasets}
\renewcommand{\arraystretch}{1.5}
\begin{tabular}{@{}p{0.32\linewidth}*{5}{c}@{}}
    \toprule
    Country & Bosnia & Egypt & Philippines (1) & Phil. (2) & S. Africa \\
    \midrule
    Year of implementation & 2009 & 2016 & 2006 & 2010 & 2004 \\
    Treated/control obs. & 551/444 & 477/478 & 891/222 & 1651/313 & 235/313 \\
    Number of covariates & 48 & 97 & 24 & 39 & 33 \\
    Avg. loan size in treatment group (USD PPP) & \$1,389 & \$9,912 & \$1,045 & \$492 & \$180 \\
    Avg household income at endline (USD PPP) & \$21,735 & \$26,844 & \$12,173 & \$10,280 & \$14,845 \\
    \begin{tabular}[c]{@{}c@{}}ATE in Household Income\\(t-statistic)\end{tabular} & \begin{tabular}[c]{@{}c@{}}\$609\\ (0.593)\end{tabular} & \begin{tabular}[c]{@{}c@{}}\$1,224\\ (0.573)\end{tabular} & \begin{tabular}[c]{@{}c@{}}\$1,374\\ (0.819)\end{tabular} & \begin{tabular}[c]{@{}c@{}}\$110\\ (0.103)\end{tabular} & \begin{tabular}[c]{@{}c@{}}-\$520\\ (-0.159)\end{tabular} \\
    \bottomrule
\end{tabular}
\begin{tablenotes}
\small
\item Study citations respectively: \citet{augsburg2015impacts}, \citet{bryan2024big}, \citet{karlan2011microcredit}, \citet{karlan2016follow}, \citet{karlan2010expanding}.
\end{tablenotes}
\end{threeparttable}
\end{table}

Using $K = 10$, I fit the estimators of Section \ref{section.asymptotics}\footnote{The sample-splitting estimators of Section \ref{section.finite_sample} with a 50-50 split lead all confidence intervals to be $[0,1]$ at $10\%$ significance level.}. For each fold, I fit another layer of sample splitting to pick via $10-$fold cross-validation one of the models: Quantile Forests, Quantile Neural Networks, Random Forest (Regression), Neural Networks (Reg.), Elastic Net (Reg.), Extreme Gradient Boosting (Reg.), and Constant ($s(X) = 0)$. Technical details are provided in Appendix \ref{appendix.ml}.

\begin{table}[ht]
    \centering
    \begin{threeparttable}
    \caption{Bounds on fraction harmed by microcredit - household income}
    \label{tab.hhinc}
    \begin{tabular}{@{}l|ccc|ccc@{}}
    \toprule
    & \multicolumn{3}{c}{Lower Bound} & \multicolumn{3}{c}{Upper Bound} \\
    \cline{2-4} \cline{5-7}
    & CAIDE & No Covariates & SJLS & CAIDE & No Covariates & SJLS \\ \midrule
    Bosnia & 
    \begin{tabular}[c]{@{}c@{}}0.108\\(0.030)\\{[}0.000{]}\end{tabular} & 
    \begin{tabular}[c]{@{}c@{}}0.040\\(0.027)\\{[}0.073{]}\end{tabular} & 
    \begin{tabular}[c]{@{}c@{}}0.024\\(0.036)\\{[}0.254{]}\end{tabular} & 
    \begin{tabular}[c]{@{}c@{}}0.946\\(0.028)\\{[}0.028{]}\end{tabular} & 
    \begin{tabular}[c]{@{}c@{}}0.973\\(0.031)\\{[}0.195{]}\end{tabular} & 
    \begin{tabular}[c]{@{}c@{}}0.977\\(0.047)\\{[}0.310{]}\end{tabular} \\ \midrule
    Egypt & 
    \begin{tabular}[c]{@{}c@{}}0.260\\(0.022)\\{[}0.000{]}\end{tabular} & 
    \begin{tabular}[c]{@{}c@{}}0.009\\(0.014)\\{[}0.263{]}\end{tabular} & 
    \begin{tabular}[c]{@{}c@{}}0.248\\(0.033)\\{[}0.000{]}\end{tabular} & 
    \begin{tabular}[c]{@{}c@{}}0.787\\(0.022)\\{[}0.000{]}\end{tabular} & 
    \begin{tabular}[c]{@{}c@{}}0.966\\(0.021)\\{[}0.051{]}\end{tabular} & 
    \begin{tabular}[c]{@{}c@{}}0.792\\(0.033)\\{[}0.000{]}\end{tabular} \\ \midrule
    Philippines (1) & 
    \begin{tabular}[c]{@{}c@{}}0.083\\(0.039)\\{[}0.018{]}\end{tabular} & 
    \begin{tabular}[c]{@{}c@{}}0.071\\(0.039)\\{[}0.036{]}\end{tabular} & 
    \begin{tabular}[c]{@{}c@{}}0.077\\(0.053)\\{[}0.075{]}\end{tabular} & 
    \begin{tabular}[c]{@{}c@{}}0.971\\(0.019)\\{[}0.057{]}\end{tabular} & 
    \begin{tabular}[c]{@{}c@{}}0.931\\(0.027)\\{[}0.005{]}\end{tabular} & 
    \begin{tabular}[c]{@{}c@{}}1.048\\(0.049)\\{[}1.000{]}\end{tabular} \\ \midrule
    Philippines (2) & 
    \begin{tabular}[c]{@{}c@{}}0.145\\(0.033)\\{[}0.000{]}\end{tabular} & 
    \begin{tabular}[c]{@{}c@{}}0.022\\(0.018)\\{[}0.103{]}\end{tabular} & 
    \begin{tabular}[c]{@{}c@{}}0.075\\(0.034)\\{[}0.013{]}\end{tabular} & 
    \begin{tabular}[c]{@{}c@{}}0.739\\(0.032)\\{[}0.000{]}\end{tabular} & 
    \begin{tabular}[c]{@{}c@{}}0.951\\(0.032)\\{[}0.062{]}\end{tabular} & 
    \begin{tabular}[c]{@{}c@{}}0.751\\(0.047)\\{[}0.000{]}\end{tabular} \\ \midrule
    South Africa & 
    \begin{tabular}[c]{@{}c@{}}0.028\\(0.017)\\{[}0.053{]}\end{tabular} & 
    \begin{tabular}[c]{@{}c@{}}0.033\\(0.018)\\{[}0.036{]}\end{tabular} & 
    \begin{tabular}[c]{@{}c@{}}-0.017\\(0.054)\\{[}1.000{]}\end{tabular} & 
    \begin{tabular}[c]{@{}c@{}}0.876\\(0.043)\\{[}0.002{]}\end{tabular} & 
    \begin{tabular}[c]{@{}c@{}}0.913\\(0.043)\\{[}0.021{]}\end{tabular} & 
    \begin{tabular}[c]{@{}c@{}}0.938\\(0.049)\\{[}0.105{]}\end{tabular} \\
    \bottomrule
    \end{tabular}
    \begin{tablenotes}
    \small
    \item Standard errors in parentheses, and p-values in brackets. The null hypothesis for the lower bound is $\theta^*_L = 0$, and for the upper bound $\theta^*_U = 1$. Response variable: $log(1 + household\ income )$.
    \end{tablenotes}
    \end{threeparttable}
\end{table}

Table \ref{tab.hhinc} shows the estimates for the lower and upper bounds on $\theta$ for $\delta = 0$ using CAIDE with and without covariates\footnote{Note that the estimators and one-sided CIs without covariates ($s(x)=0$) are equivalent to the ones in \citet{fan2010sharp}.}. 
The outcome is the logarithm of annual household income.\footnote{Since $y \mapsto \log(y+1)$ is strictly increasing for $y \ge 0$, the event $\{Y(1) - Y(0) \le 0\}$ is equivalent to $\{\log(Y(1)+1) - \log(Y(0)+1) \le 0\}$, so using the log-transformed outcome leaves $\theta$ unchanged for $\delta = 0$.} 
Assuming the distribution of $Y(1) - Y(0)$ has no mass on zero, $\theta = P(Y(1) - Y(0) \le 0) = P(Y(1) - Y(0) < 0)$, the lower bound on $\theta$ is the minimum proportion of households that have a negative effect from the microloan, and the upper bound is the maximum proportion. 
Note that one minus the upper bound denotes the minimum proportion of households with a positive effect. 
Results for $\delta = -0.05$ and $\delta = 0.05$ are similar and are presented in Table \ref{tab.deltas}. 
In Bosnia, for example, I estimate that at least $10.8\%$ of the households would be better off without the microloan (p-value $ < 10^{-3}$), and at most $94.6\%$ have negative effects (p-value $0.028$). 
That is, at least $5.4\%$ are better off ($1 - 0.946$). 
Overall, I find significant lower bounds at the $10\%$ level for all five datasets and at the $5\%$ level for four of them. 
As for the upper bound, I find statistical significance at $5\%$ in four datasets. These results reveal an important presence of heterogeneity, even when the ATEs are not statistically significant (Table \ref{tab.datasets}).

Incorporating covariates makes a substantial difference in many of the point estimates. 
In Egypt, for example, the lower bound without covariates is $0.009$ (p-value $0.263$), whereas incorporating covariates improves the estimate to $0.260$ (p-value $<10^{-3}$). 
The differences between the point estimates with and without covariates are significant at the $5\%$ level for both the lower and upper bounds for Egypt and the Philippines (2). 

Table \ref{tab.hhinc} also illustrates how CAIDE can outperform SJLS (\citealp{semenova2025debiased}, \citealp{ji2024model}) in terms of narrower confidence intervals. 
In the case of the lower bound for Bosnia, for example, it is estimated $10.8\%$ with CAIDE vs. $2.4\%$ with SJLS, leading to a substantial difference in p-values ($<10^{-3}$ vs. $0.254$). 
It also illustrates how SJLS can estimate negative lower bounds (South Africa), or upper bounds greater than one (Philippines (1)). 
In both cases, the estimates with CAIDE are statistically significant at the $10\%$ level.

Table \ref{tab.bprofits} in Online Appendix \ref{appendix.additional_tables} shows results for business profits. Again, including covariates improves the bounds. Averaging estimates across datasets, I find that at least $10.8\%$ (p-value $< 10^{-3}$) of the population was harmed, and at least $9.8\%$ (p-value $< 10^{-3}$) benefited from microcredit.

This evidence does not, however, speak to the nature or magnitude of the negative effects. 
For example, the negative outcomes could be relatively small and a natural consequence of investments being risky. 
In that case, the decision to take a microloan could still be rational if the prospect of gains is better than the risk of loss.

\section{Conclusion} \label{section.conclusion}

I propose two novel inference approaches to the distributional treatment effect $\theta = P ( Y(1) - Y(0) \le \delta )$ for any fixed $\delta$. 
A new characterization of the sharp identified set of $\theta$ is provided, based on covariate adjustment. 
It has the desirable property of being robust to the choice of covariate-adjustment term $s(x)$. 
If this term is misspecified, the identified set is wider but still valid. 
As a consequence, any method can be used to estimate $s$, including machine learning algorithms. 
I provide finite-sample valid inference using sample-splitting and asymptotic inference using cross-fitting, which is also asymptotically exact under additional conditions. 

I illustrate the practical relevance of the method in a simulation study and an application to microcredit. 
The application reveals an important presence of heterogeneity, with statistically significant proportions of individuals with positive and negative treatment effects. 
The results show that the method is able to provide informative bounds even in challenging contexts, such as when the ATE is not statistically significant, the sample size is small, and the number of covariates is large. 

By using CAIDE, an applied researcher can learn more about heterogeneous effects not only by gathering larger samples but also by collecting information on variables that help predict potential outcomes. 

\clearpage
{\singlespacing
\putbib[CAIDE]
}
\clearpage

\begin{appendices}

\counterwithin{lemma}{section}
\counterwithin{proposition}{section}
\counterwithin{theorem}{section}

\section{}

\subsection{Comparison with \texorpdfstring{\citet{semenova2025debiased}}{Semenova} and \texorpdfstring{\citet{ji2024model}}{Ji et al.}} \label{appendix.comparison}

I compare the asymptotic properties of CAIDE with the estimators proposed by \citet{semenova2025debiased} and \citet{ji2024model} (SJLS). 
I focus on the cross-fitting version of my estimators (Definition \ref{def.theta_hat_cf}) and compare them to the cross-fitting estimators in SJLS. 
Note that in the case of the DTE, $\theta$, the estimators of \citet{semenova2025debiased} and \citet{ji2024model} are equivalent. 

Since SJLS considered a known propensity score, I compare SJLS to a version of CAIDE that uses the true propensity score (Online Appendix \ref{appendix.relaxing_prop_score}) instead of using the fraction of treated units in the sample. 
In Appendix \ref{appendix.prop_score}, I show that using the actual number of treated units (as in Definition \ref{def.theta_hat_cf}) instead of the known propensity score leads to a smaller asymptotic variance. 
Hence, using the true propensity score in CAIDE is the natural comparison. 
I focus on the lower bound for simplicity, and analogous results hold for the upper bound. 
Define 
$$\hat{\Delta}(t, \hat{s}_L) = \frac{1}{n} \sum_{i = 1}^n \biggl( \frac{D_i}{p(X_i)} \mathbb{I}(Y_i \le \hat{s}_{L, k(i)}(X_i) + t) - \frac{1 - D_i}{1 - p(X_i)} \mathbb{I}(Y_i \le \hat{s}_{L, k(i)}(X_i) + t) \biggr),$$
where $\hat{s}_L = (\hat{s}_{L, 1}, \dots, \hat{s}_{L, K})$. 

Then, CAIDE is equivalent to $\hat{\theta}_C := \max_t \hat{\Delta}(t, \hat{s}_L)$ and SJLS is given by $\hat{\theta}_S := \hat{\Delta}(0, \hat{s}_L)$. 
Theorem \ref{th.asymptotic_dist_simplified} of this paper is comparable to Theorem 4.1 in \citet{semenova2025debiased}, and Theorem 3.1 and Proposition 3.2 in \citet{ji2024model}. 
They all show asymptotic validity of confidence intervals. 
Define confidence intervals as $CI_C$ for CAIDE and $CI_S$ for SJLS. 
Since by construction $\hat{\theta}_C \ge \hat{\theta}_S$, it follows that $\hat{\theta}_C$ first order stochastically dominates $\hat{\theta}_S$. 
Let $\theta_{n}^* = \Delta_{P_n}(0, s^*_L)$ be the sharp lower bound as in (\ref{def.theta_sharp}) and Theorem \ref{th.best_s}. 
Under the conditions of these theorems, it follows that, for any sequence $\theta_{n} < \theta_{n}^*$, 
$$\liminf_{n \to \infty}  P_n(\theta_{n} \in CI_S) - P_n(\theta_{n} \in CI_C) \ge 0.$$ 

Furthermore, consider a scenario where $\max_t \Delta_{P_n}(t, s_{L,P}) > \Delta_{P_n}(0, s_{L,P}) + M$ for some $M > 0$, where $s_{L,P}$ is the limit of $\hat{s}_{L,k}$ (as in A\ref{as.limit_of_s}). 
This is usually the case when $s_{L,P}$ is not a sharp transformation $s^*_{L,P}$. 
Then, for any $\Delta_{P_n}(0, s_{L,P_n}) + M \le \theta_n \le \max_t \Delta_{P_n}(t, s_{L,P_n})$, the inequality above holds with strict inequality. 

Since $\theta_{n} < \theta_{n}^*$, a lower coverage probability translates into higher power. 
Hence, $CI_C$ dominates $CI_S$ in terms of power.
Note that validity of $CI_C$ relies on A\ref{as.uniqueness_t_infty}, on the uniqueness of the maximizer of $\Delta_{P_n}(t, s_{L,P})$, which is not assumed in SJLS. 
I discuss two approaches to weaken this assumption in Online Appendix \ref{appendix.relaxing_maximizer}.

\subsection{Estimated vs True Propensity Score} \label{appendix.prop_score}

I show that when the propensity score is constant, the asymptotic variances of the estimators in Definition~\ref{def.theta_hat_cf} are smaller than those of alternative estimators that use the true propensity score. 
When the propensity score is known and not constant as in A\ref{as.rct}.ii, an alternative estimator that uses the true propensity score is proposed in Online Appendix \ref{appendix.relaxing_prop_score}-B. 
The lower bound, for example, is given by 
\begin{align*}
    \hat{\theta}^B_{L} & = \max_t \Biggl\{ \frac{1}{n} \sum_{i = 1}^n \biggl( \frac{D_i}{p(X_i)} \mathbb{I}(Y_i \le \hat{s}_{L, k(i)}(X_i) + t) \\
    & \hspace{60pt} - \frac{1 - D_i}{1 - p(X_i)} \mathbb{I}(Y_i \le \hat{s}_{L, k(i)}(X_i) + t) \biggr) \Biggr\}.
\end{align*}

Now, note that the estimators in Definition \ref{def.theta_hat_cf} can be rewritten in terms of an \textit{estimated} propensity score $\hat{p} = \frac{1}{n} \sum_{i=1}^n D_i$. 
For the lower bound, we have  
\begin{align*}
    \hat{\theta}_{L} & \equiv \max_t \Biggl\{ \frac{1}{n} \sum_{i = 1}^n \biggl( \frac{D_i}{\hat{p}} \mathbb{I}(Y_i \le \hat{s}_{L, k(i)}(X_i) + t) \\
    & \hspace{60pt} - \frac{1 - D_i}{1 - \hat{p}} \mathbb{I}(Y_i \le \hat{s}_{L, k(i)}(X_i) + t) \biggr) \Biggr\}.
\end{align*}

Now, let $t_{L,P} = t_{max,P}$; 
$t_{U,P} = t_{min,P}$; 
$\pi = P(D = 1)$ (as in A\ref{as.rct}). 
For $A \in \{L, U\}$ and $j\in \{0, 1 \}$, define $Z_{A, P}(j) = \mathbb{I}(Y(j) \le s_{A,P}(X) + t_{A,P})$. 
Then, the asymptotic variance of $\hat{\theta}_{A}$ (as in Definition \ref{def.theta_hat_cf}) is given by:
$$\frac{Var[Z_{A,P}(1)]}{\pi} + \frac{Var[Z_{A,P}(0)]}{1 - \pi}$$

Now, if $p(x) = \pi$, the asymptotic variance of $\hat{\theta}^B_{A}$ simplifies to:
$$\frac{Var[Z_{A,P}(1)]}{\pi} + \frac{Var[Z_{A,P}(0)]}{1 - \pi} + \left( \frac{\mathbb{E}[Z_{A,P}(1)]}{\pi} - \frac{\mathbb{E}[Z_{A,P}(0)]}{1 - \pi} \right)^2 \pi (1 - \pi)$$

Since the third term above is never negative and, in general, different from zero, the asymptotic variance of $\hat{\theta}_{A}$ is smaller than the one of $\hat{\theta}^B_{A}$.

\subsection{Asymptotic Distribution of Plug-in Estimator Based on \texorpdfstring{\citet{fan2010sharp}}{Fan and Park (2010)}} \label{appendix.fanpark}

I argue that it is generally difficult to characterize the asymptotic distribution of a plug-in estimator based on \citet{fan2010sharp}'s representation of the identified set (\ref{def.theta_sharp}) when the conditional distributions are estimated nonparametrically. 
I show that the asymptotic distribution depends on the rate of convergence of the estimators of $F_1(t | x)$ and $F_0(t | x)$, and that different models/algorithms used to estimate these functions lead to different asymptotic distributions of the plug-in estimator. 
This is true even when using sample-splitting. 
For simplicity, I focus on the lower bound. 

Define 
$$\Delta(t | x) = F_1(t | x) - F_0(t | x).$$
Let $\hat{F}_1(t | x)$ and $\hat{F}_0(t | x)$ be estimators respectively of $F_1(t | x)$ and $F_0(t | x)$, and $\hat{\Delta}(t | x) = \hat{F}_1(t | x) - \hat{F}_0(t | x)$. 
Define $\hat{s}(x) \in \arg\max_{t} \hat{\Delta}(t | x)$. 
Since my estimation strategies in Sections \ref{section.finite_sample} and \ref{section.asymptotics} use sample-splitting, I also use sample-splitting to define the plug-in estimator based on (\ref{def.theta_sharp}). 
Consider a random split of $\{ 1, \dots, n\}$ into three equal-sized sets $\mathcal{I}_1$, $\mathcal{I}_2$, and $\mathcal{I}_3$ ($n$ is assumed a multiple of 3 for simplicity). 
Let $\hat{\Delta}$ be estimated using data from $\mathcal{I}_1$ only, and $\hat{s}$ estimated using data from $\mathcal{I}_2$ only. 
The plug-in estimator of $\theta_L^*$ based on (\ref{def.theta_sharp}) is given by 
$$\hat{\theta}_{FP} = \frac{1}{n / 3} \sum_{i \in \mathcal{I}_3} \hat{\Delta}(\hat{s}(X_i) | X_i).$$

Let $\mathcal{D}_j = (Y_i, D_i, X_i)_{i \in \mathcal{I}_j}$ for $j=1,2,3$. 
Define 
$$\bar{\Delta}(\mathcal{D}_1, \mathcal{D}_2) = \mathbb{E} [ \hat{\Delta}(\hat{s}(X) | X) | \mathcal{D}_1, \mathcal{D}_2],$$
where $X$ is independent of $\mathcal{D}_1, \mathcal{D}_2$. 
It follows that 
$$\left( \hat{\theta}_{FP} - \theta_L^* \right) = \left( \frac{1}{n / 3} \sum_{i \in \mathcal{I}_3} \hat{\Delta}(\hat{s}(X_i) | X_i) - \bar{\Delta}(\mathcal{D}_1, \mathcal{D}_2) \right) + \left( \bar{\Delta}(\mathcal{D}_1, \mathcal{D}_2) - \theta_L^* \right).$$
Since the first term is mean-zero, it follows that 
$$\sqrt{n} \left( \frac{1}{n / 3} \sum_{i \in \mathcal{I}_3} \hat{\Delta}(\hat{s}(X_i) | X_i) - \bar{\Delta}(\mathcal{D}_1, \mathcal{D}_2) \right)$$
is asymptotically normal by a triangular array CLT. 
\sloppy
The second term, $\bar{\Delta}(\mathcal{D}_1, \mathcal{D}_2) - \theta_L^*$, is random since the data $(\mathcal{D}_1, \mathcal{D}_2)$ is random. 
$\bar{\Delta}(\mathcal{D}_1, \mathcal{D}_2) - \theta_L^*$ may or may not be bounded in probability when multiplied by $\sqrt{n}$. 
In general, the rate of convergence of this term depends on the rate of convergence of $\hat{\Delta}$. 
Since $\hat{\Delta}$ is estimated nonparametrically, this rate may be slow when regularity conditions such as sparsity and specific forms of smoothness do not hold. 
In contrast, the asymptotic distribution in Theorem 3 does not rely on such conditions. 
The asymptotic distribution of an appropriately scaled version of $\bar{\Delta}(\mathcal{D}_1, \mathcal{D}_2) - \theta_L^*$ depends on which model/algorithm is used to estimate $\hat{\Delta}$, since different estimators $\hat{\Delta}$ lead to different values of $\bar{\Delta}$. 
For example, the distribution of $\bar{\Delta}(\mathcal{D}_1, \mathcal{D}_2)$ is different and difficult to characterize if $\hat{\Delta}$ is calculated with random forests or neural networks, which are some of the models I use in Section \ref{section.application}. 

Note that the term $\bar{\Delta}(\mathcal{D}_1, \mathcal{D}_2) - \theta_L^*$ may be positive or negative. 
If $\bar{\Delta}(\mathcal{D}_1, \mathcal{D}_2)$ is symmetrically distributed around $\theta_L^*$, $\bar{\Delta}(\mathcal{D}_1, \mathcal{D}_2) - \theta_L^* > 0$ with probability $1/2$. 
In this case, a hypothesis test for $H_0: \theta_L^* \le \tau$, $\tau \in [0,1]$, based on the asymptotic distribution of $\sqrt{n} \left( \frac{1}{n / 3} \sum_{i \in \mathcal{I}_3} \hat{\Delta}(\hat{s}(X_i) | X_i) - \bar{\Delta}(\mathcal{D}_1, \mathcal{D}_2) \right)$ may overreject. 
That is, if one conducts a hypothesis test with significance level $\alpha$ based on the incorrect assumption that $\sqrt{n} \left( \bar{\Delta}(\mathcal{D}_1, \mathcal{D}_2) - \theta_L^* \right) = o_P(1)$, the test may, for example, reject $\theta_L^* = 0$ when $\theta = 0$ with probability higher than $\alpha$. 

In contrast, my approaches to inference do not depend on the rate of convergence or the specific model/algorithm used to estimate $s^*_L$. 
Theorem \ref{th.fs_cis} provides a finite-sample guarantee of coverage under no restrictions on how $\hat{s}_L$ is calculated. 
In Theorem \ref{th.cis}, I provide a large-sample guarantee of coverage under no restrictions on the rate of convergence of $\hat{s}_L$ or which model/algorithm is used to calculate it. 
The result of Theorem \ref{th.cis} is possible because of the alternative characterization of the identified set in Theorem \ref{th.best_s}. 
Unlike (\ref{def.theta_sharp}), which relies on the average of conditional expectations, (\ref{def.induced_bounds}) is expressed in terms of unconditional expectations. 
As a consequence, Theorem \ref{th.asymptotic_dist_simplified} shows that estimation error in $\hat{s}_L$ translates into negative bias, 
$$\mathbb{E}[\sqrt{n} (\hat{\theta}_L - \theta_L^*)] \le 0 + o_P(1),$$
which makes the estimate $\hat{\theta}_L$ conservative. 
Since $\theta_L^*$ is a lower bound, an estimate $\hat{\theta}_L$ that is too low leads to a confidence set that may be conservative, but valid. 

\subsection{Estimation of the covariate-adjustment term} \label{appendix.ml}

I consider two approaches to estimate the covariate-adjustment term $s(x)$. 
For simplicity, I focus the explanation on the case of the lower bound. 
First, I estimate $\hat{F}_j(t|x)$ ($j=0,1$) with one of two methods, and then I calculate $\hat{s}$ from $\hat{F}_j(t|x)$. 

The first method for estimating $\hat{F}_j(t|x)$ is using quantile models, such as quantile forests or neural networks. 
They give estimates for quantiles, that is, $\hat{F}^{-1}_j(\tau|x)$, for all values of $x$ in the sample. 
$\hat{F}_j(t|x)$ can be calculated from $\hat{F}^{-1}_j(\tau|x)$ by linear interpolation. 
In Sections \ref{section.simulation} and \ref{section.application}, I consider a grid for $\tau$ from $0$ to $1$ by increments of $0.01$, i.e. $\tau_k \in \{0, 0.01, \dots, 1 \}$.
For an arbitrary $t$, pick $\tau_a$ and $\tau_b$ such that $t_a = \hat{F}^{-1}_j(\tau_a|x), t_b = \hat{F}^{-1}_j(\tau_b|x)$ are the two closest values to $t$ such that $t_a \le t \le t_b$. 
Then, $\hat{F}_j(t|x) = \tau_a + \frac{\tau_b - \tau_a}{t_b - t_a} (t - t_a)$.

The second method for estimating $\hat{F}_j(t|x)$ assumes that the distribution of $Y(j) | X = x$ is the same for all $x$ up to a shift in the mean $\mathbb{E}[Y(j)|X=x]$. This is allowed by the framework of Theorems \ref{th.fs_cis} and \ref{th.asymptotic_dist_simplified} since misspecification of $s^*$ leads to valid even if wider bounds. Denote by $\hat{\mu}(x)$ an estimator of $\mathbb{E}[Y(j)|X=x]$, using any regression method. For example, in Section \ref{section.simulation} I use Support Vector Machine. Denote by $\hat{F}_e(t)$ the empirical CDF of the residuals $Y(j) - \hat{\mu}(X)$ inside the training sample. Then, $\hat{F}_j(t|x) = \hat{F}_e(t - \hat{\mu}(x))$.

Once $\hat{F}_j(t|x)$ is estimated, the covariate-adjustment term is calculated by $\hat{s}(x) = \arg \max_{t \in \mathcal{T}} \left\{ \hat{F}_1(t | x) - \hat{F}_0(t | x) \right\}$. For $\mathcal{T}$, I use a random grid of size $10^4$ drawn from a normal distribution with mean $0$ and standard deviation $\max(Y) - \min(Y)$.

I fit all models in \textsf{R}. For quantile neural networks, I use the package \texttt{qrnn} with 3 hidden nodes, in 1 trial and a maximum of 100 iterations. 
Other parameters are the default choice. 
For quantile random forests, I use the package \texttt{ranger} with 1,000 trees and otherwise default hyperparameters. For all the regression methods, I use the package \texttt{mlr3} with default values of hyperparameters. The packages for all models, used through \texttt{mlr3} are: random forests (\texttt{ranger}), extreme gradient boosting (\texttt{xgboost}), support vector machine (\texttt{e1071}), neural networks (\texttt{nnet}), elastic net (\texttt{glmnet}). The constant/none model takes $s(x) = 0$.

For choosing the best model in the application of Section \ref{section.application}, I use cross-validation inside each fold. 
That is, for each fold, I fit all the models in a 10-fold cross-validation and choose the model that gives the largest (smallest) lower (upper) bound. 
Note that the test set is never used in this approach.

When calculating the sample-splitting estimators and confidence intervals of Section \ref{section.finite_sample}, I use a 50-50 split of the sample. 

\subsection{Definitions Omitted from the Main Text} \label{appendix.definitions}

Let $t_{L,P} = t_{max,P}$; 
$t_{U,P} = t_{min,P}$; 
$\pi = P(D = 1)$ (as in A\ref{as.rct}); 
For $A \in \{L, U\}$ and $j\in \{0, 1 \}$, define $Z_{A, P}(j) = \mathbb{I}(Y(j) \le s_{A,P}(X) + t_{A,P})$;
$\sigma^2_{A,P}(j) = Var[Z_{A,P}(j)]$. 
Then, for $A \in \{L, U\}$:
$$\sigma^2_{A,P} = \frac{Var[Z_{A,P}(1)]}{\pi} + \frac{Var[Z_{A,P}(0)]}{1 - \pi}$$
and
$$\sigma_{L,U,P} = \frac{Cov[Z_{L,P}(1), Z_{U,P}(1)]}{\pi} + \frac{Cov[Z_{L,P}(0), Z_{U,P}(0)]}{1 - \pi}$$

For the sample analogues, for $A \in \{L, U\}$ define 
$$\hat{F}_{A,j}(t) = \frac{1}{n_j} \sum_{i : D_i = j} \mathbb{I}(Y_i \le \hat{s}_{A, k(i)}(X_i) + t);$$ 
$$\hat{t}_{L} = \arg \max_t \hat{F}_{A,1}(t) - \hat{F}_{A,0}(t); \ \hat{t}_{U} = \arg \min_t \hat{F}_{A,1}(t) - \hat{F}_{A,0}(t);$$ 
$$\hat{\sigma}^2_{A,j} = \frac{1}{n_j} \sum_{i : D_i = j} \left[ \mathbb{I}(Y_i \le \hat{s}_{A, k(i)}(X_i) + \hat{t}_{A}) - \hat{F}_{A,j}(\hat{t}_{A}) \right]^2;$$ 
\begin{align*}
    & \hat{\sigma}_{L,U,j} = \\
    & \frac{1}{n_j} \sum_{i : D_i = j} \left[ \mathbb{I}(Y_i \le \hat{s}_{L, k(i)}(X_i) + \hat{t}_{L}) - \hat{F}_{L,j}(\hat{t}_{L}) \right] \left[ \mathbb{I}(Y_i \le \hat{s}_{U, k(i)}(X_i) + \hat{t}_{U}) - \hat{F}_{U,j}(\hat{t}_{U}) \right]; 
\end{align*}
and $\hat{\pi} = n_1 / n$. Then, for $A \in \{L, U\}$,
$$\hat{\sigma}^2_{A} = \frac{\hat{\sigma}^2_{A,1}}{\hat{\pi}} + \frac{\hat{\sigma}^2_{A,0}}{1 - \hat{\pi}}; \quad \hat{\sigma}_{L,U} = \frac{\hat{\sigma}_{L,U,1}}{\hat{\pi}} + \frac{\hat{\sigma}_{L,U,0}}{1 - \hat{\pi}}.$$

\subsection{Proofs of the Main Results} \label{appendix.proofs}
\begin{proof}[\hypertarget{proof.th.fs_cis}{Proof of Theorem} \ref{th.fs_cis}] \, \\
    For the case of the lower bound, denote for $j=0,1$
        $$\hat{F}_{j, L}(t) = \frac{1}{|\mathcal{I}^{j}_M|} \sum_{i \in \mathcal{I}^{j}_M} \mathbb{I} \left(Y_i - \hat{s}_{L}(X_i) \le t \right)$$
        Let $\mathcal{I}_{A}$ denote the auxiliary sample, and let $F_{j, L}(t) = \mathbb{E}[\hat{F}_{j, L}(t) | \mathcal{I}_{A}]$ for $j=0,1$ denote the true CDF of $Y(j) - \hat{s}_L(X)$ taking $\hat{s}_L$ as fixed. 
        Let $\hat{t}_{max} \in \arg\max_t [\hat{F}_{1, L}(t) - \hat{F}_{0, L}(t)]$.
    
        Then, we have that $\widetilde{\theta}_L = \hat{F}_{1, L}(\hat{t}_{max}) - \hat{F}_{0, L}(\hat{t}_{max})$. 
        Finally, denote $c_{\alpha,1} = \left( \frac{\log(2 / \alpha)}{2 |\mathcal{I}^{1}_M|} \right)^{1/2}$, $c_{\alpha,0} = \left( \frac{\log(2 / \alpha)}{2 |\mathcal{I}^{0}_M|} \right)^{1/2}$. Note that $c_{\alpha,1} + c_{\alpha,0} = c_\alpha$ as defined in Theorem \ref{th.fs_cis}.
        
        I first focus on the case of the one-sided interval $P \left( \theta \in \left[\widetilde{\theta}_L - c_\alpha, 1 \right] \right)$. 
        Note that $F_{1, L}(t) - F_{0, L}(t)$ is a valid lower bound by (\ref{def.induced_bounds}) for any $t$ and $\hat{s}_{L}$. 
        Thus, $\theta \in [F_{1, L}(\hat{t}_{max}) - F_{0, L}(\hat{t}_{max}), 1]$ and $F_{1, L}(\hat{t}_{max}) - F_{0, L}(\hat{t}_{max}) \le \theta$. 
        Then,
        \begin{align}
            & P \left( \theta \in \left[\widetilde{\theta}_L - c_\alpha, 1 \right] \right) \nonumber \\
            & \ge P \left( F_{1, L}(\hat{t}_{max}) - F_{0, L}(\hat{t}_{max}) \ge \widetilde{\theta}_L - c_\alpha \right) \nonumber \\
            & = 1 - P \left( \left[ \hat{F}_{1,L}(\hat{t}_{max}) - F_{1,L}(\hat{t}_{max}) \right] + \left[ F_{0,L}(\hat{t}_{max}) - \hat{F}_{0,L}(\hat{t}_{max}) \right] > c_\alpha \right) \nonumber \\
            & \ge 1 - P \left( \left[ \hat{F}_{1,L}(\hat{t}_{max}) - F_{1,L}(\hat{t}_{max}) \right] > c_{\alpha,1} \lor \left[ F_{0,L}(\hat{t}_{max}) - \hat{F}_{0,L}(\hat{t}_{max}) \right] > c_{\alpha,0} \right) \nonumber \\
            & \ge 1 - P \left( \hat{F}_{1,L}(\hat{t}_{max}) - F_{1,L}(\hat{t}_{max}) > c_{\alpha, 1} \right) - P \left( F_{0,L}(\hat{t}_{max}) - \hat{F}_{0,L}(\hat{t}_{max}) > c_{\alpha, 0} \right) \nonumber \\
            & \ge 1 - P \left( \sup_t \left( \hat{F}_{1,L}(t) - F_{1,L}(t) \right) > c_{\alpha, 1} \right) - P \left( \sup_t \left( F_{0,L}(t) - \hat{F}_{0,L}(t) \right) > c_{\alpha, 0} \right) \nonumber \\
            & = 1 - \mathbb{E} \left[ P \left( \sup_t \left( \hat{F}_{1,L}(t) - F_{1,L}(t) \right) > c_{\alpha, 1} \Bigm| \mathcal{I}_{A} \right) \right] \nonumber \\
            & \hspace{15pt} - \mathbb{E} \left[ P \left( \sup_t \left( F_{0,L}(t) - \hat{F}_{0,L}(t) \right) > c_{\alpha, 0} \Bigm| \mathcal{I}_{A} \right) \right] \label{eq.th.fs_cis.lie} \\
            & \ge 1 - \mathbb{E} \left[ \exp{(-2 |\mathcal{I}^{1}_M| c_{\alpha, 1}^2)} \right] - \mathbb{E} \left[ \exp{(-2 |\mathcal{I}^{0}_M| c_{\alpha, 0}^2)} \right] \label{eq.th.fs_cis.dkw} \\
            & = 1 - \exp{(-2 |\mathcal{I}^{1}_M| c_{\alpha, 1}^2)} - \exp{(-2 |\mathcal{I}^{0}_M| c_{\alpha, 0}^2)} \nonumber \\
            & = 1 - \alpha \nonumber
        \end{align}
        
        (\ref{eq.th.fs_cis.lie}) holds by the law of iterated expectations, and note the expected values are in respect to $\mathcal{I}_A$. 
        (\ref{eq.th.fs_cis.dkw}) holds by the one-sided DKW inequality as in \citet{massart1990tight}. 
        Note the use of sample splitting and conditioning on the auxiliary set $\mathcal{I}_{A}$ is important since it ensures that $(Y_i, \hat{s}_{L}(X_i), D_i)_{i \in \mathcal{I}^1_{M} \cup \mathcal{I}^0_{M}}$ are iid, assumption required for the DKW inequality. 
    
        The proof for $P \left( \theta \in \left[ 0, \widetilde{\theta}_U + c_\alpha \right] \right) \ge 1 - \alpha$ is analogous. 
        Finally, the validity of the two-sided confidence interval follows immediately:
        \begin{align*}
            P \Biggl( \theta \in \left[\widetilde{\theta}_L - c_{\alpha / 2}, \widetilde{\theta}_U + c_{\alpha / 2} \right] \Biggr) & = P \Biggl( \theta \in \left[\widetilde{\theta}_L - c_{\alpha / 2}, 1 \right] \land \theta \in \left[0, \widetilde{\theta}_U + c_{\alpha / 2} \right] \Biggr) \\
            & = 1 - P \Biggl( \theta \not\in \left[\widetilde{\theta}_L - c_{\alpha / 2}, 1 \right] \lor \theta \not\in \left[0, \widetilde{\theta}_U + c_{\alpha / 2} \right] \Biggr) \\
            & \ge 1 - P \Biggl( \theta \not\in \left[\widetilde{\theta}_L - c_{\alpha / 2}, 1 \right] \Biggr) - P \Biggl( \theta \not\in \left[0, \widetilde{\theta}_U + c_{\alpha / 2} \right] \Biggr) \\
            & \ge 1 - \alpha/2 - \alpha/2 \\
            & = 1 - \alpha
        \end{align*}
    \end{proof}

\paragraph{Definitions used in the proofs of the results of Section \ref{section.asymptotics}:}

\begin{itemize}
    \item $t_{L,P} = t_{max,P}, \ t_{U,P} = t_{min,P}$;
    \item $F_{P, j}(t,s) = P ( Y(j) \le s(X) + t )$ (note $F_{P, j}(t, s)$ is random if $t$ and/or $s$ are random);
    \item $\mathcal{I}_{j,k} = \{ i \in \{1, \dots, n \} : D_i = j \land i \in \mathcal{I}_k \}$;
    \item $\hat{F}_{j, k}(t, s) = \frac{1}{n_j / K} \sum_{i \in \mathcal{I}_{j,k}} \mathbb{I}(Y_i \le s(X_i) + t)$;
    \item $\hat{t}_{L} = \arg \max_t \frac{1}{K} \sum_{k=1}^K \left[ \hat{F}_{1, k}(t,\hat{s}_{L,k}) - \hat{F}_{0, k}(t,\hat{s}_{L,k}) \right]$; 
    \item $\hat{t}_{U} = \arg \min_t \frac{1}{K} \sum_{k=1}^K \left[ \hat{F}_{1, k}(t,\hat{s}_{U,k}) - \hat{F}_{0, k}(t,\hat{s}_{U,k}) \right]$; 
    \item $\hat{\pi} = n_1 / n$. 
\end{itemize}

Theorem \ref{th.asymptotic_dist_simplified} is restated as Theorem \ref{th.asymptotic_dist}:

\begin{theorem}[\hyperlink{proof.th.asymptotic_dist}{Asymptotic distribution of cross-fitting estimators}] \label{th.asymptotic_dist}
    Let $\mathcal{P}$ denote a set of probability measures satisfying Assumptions \ref{as.rct} and \ref{as.asymptotic}. 
    Let $\{P_n\}_{n \ge 1} \subset \mathcal{P}$ be a sequence of probability measures such that
    $$\begin{pmatrix} \sigma^2_{L,P_n} & \sigma_{L,U,P_n} \\ \sigma_{L,U,P_n} & \sigma^2_{U,P_n} \end{pmatrix} \to \begin{pmatrix} \sigma^2_L & \sigma_{L,U} \\ \sigma_{L,U} & \sigma^2_{U} \end{pmatrix}$$
    for some $\sigma^2_L, \sigma^2_U \ge 0$ and $\sigma_{L,U} \in \mathbb{R}$.
    
    Then, there exists $\bar{\theta}_{L, P_n}, \bar{\theta}_{U, P_n}$ with $\bar{\theta}_{L, P_n} \le \theta_{L,P_n}^* \le \theta_{P_n} \le \theta_{U,P_n}^* \le \bar{\theta}_{U, P_n}$ such that
    $$\sqrt{n} \begin{bmatrix} \hat{\theta}_{L} - \bar{\theta}_{L, P_n} \\ \hat{\theta}_{U} - \bar{\theta}_{U, P_n} \end{bmatrix} \overset{d}{\to} \mathcal{N} \left( \begin{bmatrix} 0 \\ 0 \end{bmatrix}, \begin{bmatrix} \sigma^2_{L} & \sigma_{L,U} \\ \sigma_{L,U} & \sigma^2_{U} \end{bmatrix} \right)$$
\end{theorem}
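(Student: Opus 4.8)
The plan is to center each estimator at the \emph{random} population bound induced by the estimated adjustment terms, so that all of the nuisance-estimation error is absorbed into the center and the residual fluctuation is pure sampling noise. I would set
$$\bar\theta_{L,P_n} = \max_t\; \frac{1}{K}\sum_{k=1}^K \Delta_{P_n}\!\left(t,\hat s_{L,k}\right),\qquad \bar\theta_{U,P_n} = 1+\min_t\; \frac{1}{K}\sum_{k=1}^K \Delta_{P_n}\!\left(t,\hat s_{U,k}\right),$$
the Makarov bounds that the fixed functions $\hat s_{L,k},\hat s_{U,k}$ would induce under $P_n$. The required ordering follows from results already in hand: validity of the induced bounds for \emph{any} $s$ (equation~(\ref{def.induced_bounds})) gives $\bar\theta_{L,P_n}\le\theta_{P_n}\le\bar\theta_{U,P_n}$, while Jensen's inequality gives $\max_t\mathbb{E}[F_1(\hat s(X)+t\mid X)-F_0(\hat s(X)+t\mid X)]\le\mathbb{E}[\max_\tau\{F_1(\tau\mid X)-F_0(\tau\mid X)\}]=\theta_{L,P_n}^*$, and symmetrically $\bar\theta_{U,P_n}\ge\theta_{U,P_n}^*$; combined with sharpness of $[\theta_{L,P_n}^*,\theta_{U,P_n}^*]$ this yields $\bar\theta_{L,P_n}\le\theta_{L,P_n}^*\le\theta_{P_n}\le\theta_{U,P_n}^*\le\bar\theta_{U,P_n}$.

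Next I would condition on the estimated family $\{\hat s_{L,k},\hat s_{U,k}\}_{k=1}^K$ and exploit cross-fitting: within fold $k$ the observations are i.i.d.\ draws from $P_n$ that are independent of $\hat s_{L,k}$, so the process $t\mapsto\widehat D_L(t):=\frac1K\sum_k[\widehat F_{1,k}(t,\hat s_{L,k})-\widehat F_{0,k}(t,\hat s_{L,k})]$ has conditional mean $m_n(t):=\frac1K\sum_k\Delta_{P_n}(t,\hat s_{L,k})$, which is maximized at the (random) point $t^\dagger$ with value $\bar\theta_{L,P_n}$. Writing $G_n=\widehat D_L-m_n$ for the conditionally centered process and $\hat t_L=\arg\max_t\widehat D_L(t)$, optimality of $t^\dagger$ and $\hat t_L$ gives $\widehat\theta_L-\bar\theta_{L,P_n}=G_n(t^\dagger)+[G_n(\hat t_L)-G_n(t^\dagger)]+[m_n(\hat t_L)-m_n(t^\dagger)]$ together with the lower bound $\widehat\theta_L-\bar\theta_{L,P_n}\ge G_n(t^\dagger)$. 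Under A\ref{as.uniqueness_t_infty} (well-separated optimizer, uniform over $\mathcal P$) and A\ref{as.limit_of_s} I would first show $\hat t_L\overset{P}{\to}t_{max,P_n}$ and $t^\dagger\overset{P}{\to}t_{max,P_n}$; stochastic equicontinuity of $\sqrt n\,G_n$ near $t_{max,P_n}$ then makes $\sqrt n[G_n(\hat t_L)-G_n(t^\dagger)]=o_P(1)$, and the lower bound forces the nonpositive term $\sqrt n[m_n(\hat t_L)-m_n(t^\dagger)]$ to be $o_P(1)$ as well. This reduces the problem to the single leading term $\sqrt n\,G_n(t_{max,P_n})$ --- the precise content of the claim that the optimizers are ``asymptotically irrelevant.''

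It remains to establish a central limit theorem for $\sqrt n\,G_n(t_{max,P_n})$ and to pass to the joint statement. Conditional on the nuisances this is a normalized sum of row-wise independent, uniformly bounded summands $\mathbb{I}(Y_i\le\hat s_{L,k(i)}(X_i)+t_{max,P_n})$ (centered and weighted by $1/n_1$ or $1/n_0$), so a Lindeberg triangular-array CLT applies; A\ref{as.bounded_variance} keeps the variance bounded away from zero, while A\ref{as.cont_outcomes} (no point masses) and A\ref{as.limit_of_s} ($\hat s_{L,k}\to s_{L,P_n}$) ensure the conditional variance converges to $\sigma^2_{L,P_n}=\mathrm{Var}[Z_{L,P_n}(1)]/\pi+\mathrm{Var}[Z_{L,P_n}(0)]/(1-\pi)$ using $n_1/n\to\pi$. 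Running the symmetric argument for $\widehat\theta_U$ with $\min$ and $t_{min,P_n}$, the bivariate limit follows from the Cram\'er--Wold device: every linear combination $a\,\widehat\theta_L+b\,\widehat\theta_U$ linearizes to a single sum of bounded independent terms with asymptotic variance $a^2\sigma^2_{L}+2ab\,\sigma_{L,U}+b^2\sigma^2_U$, using the assumed convergence of the covariance matrix. The uniform (in $P\in\mathcal P$) version of Theorem~\ref{th.asymptotic_dist_simplified} then follows by the usual argument of extracting, from any subsequence, a further subsequence along which the covariances converge and applying this sequence result.

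The hard part is the stochastic-equicontinuity/localization step of the second paragraph carried out \emph{uniformly} over $\mathcal P$. The indexing class $\{(y,x)\mapsto\mathbb{I}(y\le\hat s_{L,k}(x)+t)\}_t$ is itself random (it depends on the estimated $\hat s_{L,k}$) and the summands are non-smooth indicators, so both the equicontinuity of $\sqrt n\,G_n$ and the consistency of $\hat t_L$ must be argued for triangular arrays with a data-dependent function class. Cross-fitting is the essential device that tames this --- conditioning on the out-of-fold $\hat s_{L,k}$ restores i.i.d.\ sampling and fixes the function class --- after which A\ref{as.cont_outcomes} supplies the $L^2$-continuity in $t$ needed for tightness and A\ref{as.uniqueness_t_infty} supplies the separation needed for localization. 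Everything downstream (the CLT and the Cram\'er--Wold step) is then routine.
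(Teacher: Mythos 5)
Your argument is correct, and it rests on the same core machinery as the paper's proof --- cross-fitting so that conditioning on the out-of-fold $\hat s_{A,k}$ restores i.i.d.\ sampling with a fixed Donsker indicator class, consistency of the sample optimizer under A\ref{as.uniqueness_t_infty}, stochastic equicontinuity to localize, and a triangular-array CLT with Cram\'er--Wold --- but your decomposition is genuinely different from the paper's. You center at $\bar\theta_{L,P_n}=\max_t m_n(t)$ with $m_n(t)=K^{-1}\sum_{k}\Delta_{P_n}(t,\hat s_{L,k})$, which obliges you to kill the extra nonpositive term $m_n(\hat t_L)-m_n(t^\dagger)$ via the argmax sandwich (the one-sided bound $\widehat\theta_L-\bar\theta_{L,P_n}\ge G_n(t^\dagger)$ combined with equicontinuity of $\sqrt{n}\,G_n$). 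The paper instead takes the center to be the population functional evaluated at the \emph{sample} optimizer, $\bar\theta_{L,P_n}=K^{-1}\sum_k\bigl[F_{P_n,1}(\hat t_L,\hat s_{L,k})-F_{P_n,0}(\hat t_L,\hat s_{L,k})\bigr]$, so that $\widehat\theta_L-\bar\theta_{L,P_n}$ is \emph{exactly} a centered empirical process at estimated indices and no sandwich step is needed; the proof then reduces to two equicontinuity substitutions ($\hat s_{L,k}\to s_{L,P_n}$, then $\hat t_L\to t_{max,P_n}$) followed by the CLT. The two centerings differ by $m_n(\hat t_L)-\max_t m_n(t)=o_{P_n}(n^{-1/2})$ --- a fact your own sandwich argument supplies --- so either satisfies the theorem's existence claim, and both give the ordering $\bar\theta_{L,P_n}\le\theta^*_{L,P_n}$ (your max-of-average versus average-of-max inequality plays the role of the paper's appeal to validity of the induced bounds for any $s$). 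Your route buys a center that is the natural ``oracle'' bound induced by the fitted nuisances at the cost of one extra localization lemma; the paper's route buys shorter bookkeeping at the cost of a center that depends on $\hat t_L$.
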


\begin{proof}[\hypertarget{proof.th.asymptotic_dist}{Proof of Theorem} \ref{th.asymptotic_dist}] \, \\

For $A \in \{L,U\}$, define 
\begin{itemize}
    \item $\bar{\theta}_{A, P, k} = F_{P,1}(\hat{t}_{A},\hat{s}_{A,k}) - F_{P,0}(\hat{t}_{A}, \hat{s}_{A,k})$,
    \item $\bar{\theta}_{A,P} = \frac{1}{K} \sum_{k=1}^K \bar{\theta}_{A, P, k}$, 
    \item $\hat{\theta}_{A,k} = \hat{F}_{1, k}(\hat{t}_{A},\hat{s}_{A,k}) - \hat{F}_{0, k}(\hat{t}_{A},\hat{s}_{A,k})$. 
\end{itemize}
Note that $\hat{\theta}_A = \frac{1}{K} \sum_{k=1}^K \hat{\theta}_{A,k}$, and that
$$\hat{\theta}_{A,k} - \bar{\theta}_{A, P_n, k} = \left[ \hat{F}_{1,k}(\hat{t}_{A},\hat{s}_{A,k}) - F_{P_n,1}(\hat{t}_{A},\hat{s}_{A,k}) \right] - \left[ \hat{F}_{0,k}(\hat{t}_{A},\hat{s}_{A,k}) - F_{P_n,0}(\hat{t}_{A},\hat{s}_{A,k}) \right]$$

The proof consists in showing, for $A \in \{L,U\}$ and $j=0,1$, that
\begin{align}
    & \sqrt{\frac{n_j}{K}} \left( \hat{F}_{j,k}(\hat{t}_{A}, \hat{s}_{A,k}) - F_{P_n,j}(\hat{t}_{A}, \hat{s}_{A,k}) \right) \nonumber \\
    & = \sqrt{\frac{n_j}{K}} \left( \hat{F}_{j,k}(\hat{t}_{A}, s_{A,P_n}) - F_{P_n,j}(\hat{t}_{A}, s_{A,P_n}) \right) + o_{P_n}(1) \label{eq.th.asymptotic_dist.1} \\
    & = \sqrt{\frac{n_j}{K}} \left( \hat{F}_{j,k}(t_{A,P_n}, s_{A,P_n}) - F_{P_n,j}(t_{A,P_n}, s_{A,P_n}) \right) + o_{P_n}(1) \label{eq.th.asymptotic_dist.2}
\end{align}

Finally, I show
\begin{equation} \label{eq.th.asymptotic_dist.3}
    \sqrt{n} \begin{bmatrix} \hat{\theta}_L - \bar{\theta}_{L,P_n} \\ \hat{\theta}_U - \bar{\theta}_{U,P_n} \end{bmatrix} \overset{d}{\to} \mathcal{N} \left( \begin{bmatrix} 0 \\ 0 \end{bmatrix}, \begin{bmatrix} \sigma^2_L & \sigma_{L,U} \\ \sigma_{L,U} & \sigma^2_{U} \end{bmatrix} \right)
\end{equation}

Note $\bar{\theta}_{L, P_n} \le \theta_{L,P_n}^* \le \theta_{P_n} \le \theta_{U,P_n}^* \le \bar{\theta}_{U, P_n}$ holds since, for each $k$, $\theta_{P_n} \in [\bar{\theta}_{L, P_n, k}, \bar{\theta}_{U, P_n, k}]$ holds from (\ref{def.induced_bounds}).

\begin{itemize}
    \item \textbf{Step 1: Equation (\ref{eq.th.asymptotic_dist.1})}
    
    I show that 
    \begin{align*}
        & \left( \sqrt{\frac{n_j}{K}} \right)^{-1} \sum_{i \in \mathcal{I}_{j,k}} \Bigl[ \left( \mathbb{I}(Y_i \le \hat{s}_{A,k}(X_i) + \hat{t}_{A}) - \mathbb{I}(Y_i \le s_{A,P_n}(X_i) + \hat{t}_{A}) \right) \\
        & \hspace*{90pt} - \left( F_{P_n,j}(\hat{t}_{A}, \hat{s}_{A,k}) - F_{P_n,j}(\hat{t}_{A}, s_{A,P_n}) \right) \Bigr] \\
        & =: \left( \sqrt{\frac{n_j}{K}} \right)^{-1} \sum_{i \in \mathcal{I}_{j,k}} W_i(\hat{t}_{A}) \\
        & \le \sup_t \left( \sqrt{\frac{n_j}{K}} \right)^{-1} \sum_{i \in \mathcal{I}_{j,k}} W_i(t) = o_{P_n}(1)
    \end{align*}

    The result follows from stochastic equicontinuity of $\left( \sqrt{\frac{n_j}{K}} \right)^{-1} \sum_{i \in \mathcal{I}_{j,k}} W_i(t)$ plus pointwise convergence in $t$ to zero (see, e.g., Theorem 22.9 in \citet{davidson2021stochastic}). 

    Note that
    \begin{align*}
        & \left( \sqrt{\frac{n_j}{K}} \right)^{-1} \sum_{i \in \mathcal{I}_{j,k}} \Bigl[ \left( \mathbb{I}(Y_i \le \hat{s}_{A,k}(X_i) + t) - \mathbb{I}(Y_i \le s_{A,P_n}(X_i) + t) \right) \\
        & \hspace*{90pt} - \left( F_{P_n,j}(t, \hat{s}_{A,k}) - F_{P_n,j}(t, s_{A,P_n}) \right) \Bigr] \\
        & = \left( \sqrt{\frac{n_j}{K}} \right)^{-1} \sum_{i \in \mathcal{I}_{j,k}}  \left( \mathbb{I}(Y_i \le \hat{s}_{A,k}(X_i) + t) - F_{P_n,j}(t, \hat{s}_{A,k}) \right) \\
        & \hspace*{20pt} - \left( \sqrt{\frac{n_j}{K}} \right)^{-1} \sum_{i \in \mathcal{I}_{j,k}} \left( \mathbb{I}(Y_i \le s_{A,P_n}(X_i) + t) - F_{P_n,j}(t, s_{A,P_n}) \right), \\
    \end{align*}
    and so it is enough to show stochastic equicontinuity of 
    $$\left( \sqrt{\frac{n_j}{K}} \right)^{-1} \sum_{i \in \mathcal{I}_{j,k}} \left( \mathbb{I}(Y_i \le \tilde{s}(X_i) + t) - F_{P_n,j}(t, \tilde{s}) \right)$$
    for an arbitrary random variable $\tilde{s}$ independent of $\mathcal{I}_{j,k}$. 
    This follows since $\mathcal{F} = \{ f_{t} : \mathbb{R} \rightarrow [0,1], w \to f_{t}(w) = \mathbb{I} (w \le t) : t \in \mathbb{R} \}$ is Donsker and Pre-Gaussian uniformly in $P \in \{ P^* : (Y - s(X), D, X) \sim P^*, (Y, D, X) \sim P' \in \mathcal{P}, s \in \{ s' : \mathcal{X} \to \mathcal{Y} \} \}$. 
    This is the case since the covering number of $\mathcal{F}$ for any such $P$ does not depend on $s$. 

    Finally, for verifying pointwise convergence, note that 
    $$\mathbb{E} [ W_i(t) ] = \mathbb{E} [ W_i(t) | \hat{s}_{A,k}] = 0.$$ 
    Hence, it is enough to show that
    \begin{align*}
        Var \left[ \left( \sqrt{\frac{n_j}{K}} \right)^{-1} \sum_{i \in \mathcal{I}_{j,k}} W_i(t) \right] & = \mathbb{E} \left[Var \left[\left( \sqrt{\frac{n_j}{K}} \right)^{-1} \sum_{i \in \mathcal{I}_{j,k}} W_i(t) \Bigm| \hat{s}_{A,k} \right] \right] \\
        & = \mathbb{E} \left[Var \left[ W_i(t) \Bigm| \hat{s}_{A,k} \right] \right] \\
        & = Var \left[ W(t) \right] \to 0
    \end{align*}
    where the first equality follows from the Law of Total Variance and $\mathbb{E}[W_i(t) \mid \hat{s}_{A,k}] = 0$; the second from the fact that $\{ W_i(t) \}_{i \in \mathcal{I}_{j,k}}$ are iid conditional on $\hat{s}_{A,k}$; and the third from the Law of Total Variance applied to $W(t)$ with $\mathbb{E}[W(t) \mid \hat{s}_{A,k}] = 0$.
    It remains to show $Var[W(t)] \to 0$. 
    Applying $(a - b)^2 \le 2a^2 + 2b^2$,
    \begin{align*}
        & Var \left[ W(t) \right] \\
        & \le \mathbb{E} \left[ W(t)^2 \right] \\
        & \le 2 \mathbb{E} \left[ \left| \mathbb{I}(Y_i \le \hat{s}_{A,k}(X_i) + t) - \mathbb{I}(Y_i \le s_{A,P_n}(X_i) + t) \right| \right] \\
        & \hspace{5pt} + 2 \mathbb{E} \left[ \left( F_{P_n,j}(t, \hat{s}_{A,k}) - F_{P_n,j}(t, s_{A,P_n}) \right)^2 \right]. 
    \end{align*}
    
    By the law of iterated expectations, the first term 
    \begin{align*}
        & \mathbb{E} \left[ \left| \mathbb{I}(Y_i \le \hat{s}_{A,k}(X_i) + t) - \mathbb{I}(Y_i \le s_{A,P_n}(X_i) + t) \right| \right] \\
        & = \mathbb{E}\!\left[ \mathbb{E}\!\left[\left|\mathbb{I}(Y \le \hat{s}_{A,k}(X) + t) - \mathbb{I}(Y \le s_{A,P_n}(X) + t)\right| \Bigm| \hat{s}_{A,k}, X \right] \right] \\
        & = \mathbb{E}\!\left[ \bigl| F_j(\hat{s}_{A,k}(X) + t \mid X) - F_j(s_{A,P_n}(X) + t \mid X) \bigr| \right],
    \end{align*}
    where $F_j(\cdot \mid x)$ denotes the CDF of $Y(j) \mid X = x$ under $P_n$. By A\ref{as.cont_outcomes}, $F_j(\cdot \mid x)$ is equicontinuous, and by A\ref{as.limit_of_s} $|\hat{s}_{A,k}(X) - s_{A,P_n}(X)| \overset{P_n}{\longrightarrow} 0$. By dominated convergence (the integrand is bounded by $1$), this term converges to $0$. 
    
    The second term satisfies $|F_{P_n,j}(t, \hat{s}_{A,k}) - F_{P_n,j}(t, s_{A,P_n})| \le \mathbb{E}[|F_j(\hat{s}_{A,k}(X) + t \mid X) - F_j(s_{A,P_n}(X) + t \mid X)|]$ by Jensen's inequality, so it also converges to $0$, and $Var[W(t)] \to 0$.

    \item \textbf{Step 2: Equation (\ref{eq.th.asymptotic_dist.2})}
    
    The result follows from (see, for example, Lemma 19.24 in \citet{van1998asymptotic}):
    \begin{itemize}
        \item[(i)] Asymptotic equicontinuity of
        $$\mathbb{G}_{A, P, j, k}(t) = \sqrt{\frac{n_j}{K}} \left( \hat{F}_{j,k}(t, s_{A,P_n}) - F_{P_n,j}(t, s_{A,P_n}) \right)$$
        uniformly in $P \in \mathcal{P}$; 
        \item[(ii)] $|\hat{t}_{A} - t_{A,P_n}| \overset{P_n}{\longrightarrow} 0$; 
        \item[(iii)] $\int \left[ \left| \mathbb{I}(Y(j) \le s_{A,P_n} + \hat{t}_A) - \mathbb{I}(Y(j) \le s_{A,P_n} + t_{A,P_n}) \right|^2 \right] dP(Y(j)) \overset{P}{\to} 0$, a (marginal) expectation over the distribution of $Y(j)$ taking $\hat{t}_A$ fixed.
    \end{itemize}

    (i) follows from the same argument for stochastic equicontinuity as in Step 1 (equation \ref{eq.th.asymptotic_dist.1}). 
    (ii) follows from consistency of M-estimators (e.g., Corollary 3.2.3 in \citet{vandervaartWeakConvergenceEmpirical2023}), given that $P(Y(j) \le t)$ is equicontinuous by A\ref{as.cont_outcomes} and $t_{A,P}$ is unique and well-separated by A\ref{as.uniqueness_t_infty}; 
    (iii) follows from equicontinuity of $P(Y(j) \le t)$ by A\ref{as.cont_outcomes} and since $|\hat{t}_{A} - t_{A,P_n}| \overset{P_n}{\longrightarrow} 0$.

    \item \textbf{Step 3: Equation (\ref{eq.th.asymptotic_dist.3})}
    
    We have that
    \begin{align}
        & \sqrt{n} (\hat{\theta}_A - \bar{\theta}_{A,P_n}) \nonumber \\
        & = \sqrt{n} \frac{1}{K} \sum_{k=1}^K \left( \hat{\theta}_{A,k} - \bar{\theta}_{A, P_n, k} \right) \nonumber \\
        & = \sqrt{n} \frac{1}{K} \sum_{k=1}^K \Bigl( \left[ \hat{F}_{1,k}(\hat{t}_{A},\hat{s}_{A,k}) - F_{P_n,1}(\hat{t}_{A},\hat{s}_{A,k}) \right] \nonumber \\
        & \hspace{70pt} - \left[ \hat{F}_{0,k}(\hat{t}_{A},\hat{s}_{A,k}) - F_{P_n,0}(\hat{t}_{A},\hat{s}_{A,k}) \right] \Bigr) \nonumber \\
        & = \sqrt{n} \frac{1}{K} \sum_{k=1}^K \Bigl( \left[ \hat{F}_{1,k}(t_{A,P_n}, s_{A,P_n}) - F_{P_n,1}(t_{A,P_n}, s_{A,P_n}) \right] \nonumber \\
        & \hspace{70pt} - \left[ \hat{F}_{0,k}(t_{A,P_n}, s_{A,P_n}) - F_{P_n,0}(t_{A,P_n}, s_{A,P_n}) \right] \Bigr) \nonumber \\
        & \hspace{15pt} + \sqrt{n} \frac{1}{K} \sum_{k=1}^K \left( \frac{\sqrt{K}}{\sqrt{n_1}} o_{P_n}(1) - \frac{\sqrt{K}}{\sqrt{n_0}} o_{P_n}(1) \right) \label{eq.th.asymptotic_dist.3.1} \\
        & = \frac{1}{\sqrt{\hat{\pi}}} \frac{1}{\sqrt{n_1}} \sum_{i : D_i = 1} \left[ \mathbb{I}(Y_i \le s_{A,P_n}(X_i) + t_{A,P_n}) - F_{P_n,1}(t_{A,P_n}, s_{A,P_n}) \right] \nonumber \\
        & \hspace{15pt} - \frac{1}{\sqrt{1 - \hat{\pi}}} \frac{1}{\sqrt{n_0}} \sum_{i : D_i = 0} \left[ \mathbb{I}(Y_i \le s_{A,P_n}(X_i) + t_{A,P_n}) - F_{P_n,0}(t_{A,P_n}, s_{A,P_n}) \right] \nonumber \\
        & \hspace{15pt} + o_{P_n}(1) \nonumber
    \end{align}
    where (\ref{eq.th.asymptotic_dist.3.1}) follows from (\ref{eq.th.asymptotic_dist.1}) and (\ref{eq.th.asymptotic_dist.2}). 

    Finally, the result follows from a triangular array Central Limit Theorem (e.g., Theorems 27.2 and 27.3 in \citet{billingsley1995probability}), with joint normality following from the Cram\'er-Wold device (e.g., p. 18 in \citet{serfling1980approximation}).
    \end{itemize}
\end{proof}

\begin{proof}[\hypertarget{proof.th.cis}{Proof of Theorem} \ref{th.cis}] \, \\
    The inequalities for the one-sided confidence intervals follow directly from consistency of $(\hat{\sigma}_{L}, \hat{\sigma}_{U})$ and from Theorem \ref{th.asymptotic_dist}, since $\bar{\theta}_{L, P_n} \le \theta_{L,P_n}^* \le \theta_{P_n} \le \theta_{U,P_n}^* \le \bar{\theta}_{U, P_n}$. For the two-sided case, the inequality follows from Theorem \ref{th.asymptotic_dist} and consistency of $(\hat{\sigma}_{L}, \hat{\sigma}_{U}, \hat{\sigma}_{L,U})$ by applying Proposition 3 in \citet{stoye2009more} centering the estimators at the outer bounds $\bar{\theta}_{L, P_n}$ and $\bar{\theta}_{U, P_n}$. Since these are in general not sharp, the equality in \citet{stoye2009more} becomes an inequality.

    Consistency of $(\hat{\sigma}_{L}, \hat{\sigma}_{U}, \hat{\sigma}_{L,U})$ follows from the weak law of large numbers, equicontinuity of $P(Y(j) \le t)$ by A\ref{as.cont_outcomes}, $|\hat{s}_{A,k}(X) - s_{A,P_n}(X)| \overset{P_n}{\longrightarrow} 0$ by A\ref{as.limit_of_s} and $\hat{t}_{A} - t_{A,P_n} \overset{P_n}{\longrightarrow} 0$ (see Step 2 in proof of Theorem \ref{th.asymptotic_dist}).
\end{proof}

\begin{proof}[\hypertarget{proof.th.exactness}{Proof of Theorem} \ref{th.exactness}] \, \\
    
    Let $A \in \{L,U\}$. $\bar{\theta}_{A,P} - \theta_{A,P}^* \overset{P}{\longrightarrow} 0$ follows since 
    $$\bar{\theta}_{A,P,k} - \theta_{A,P}^* = [F_{P,1}(\hat{t}_{A},\hat{s}_{A,k}) - F_{P,0}(\hat{t}_{A}, \hat{s}_{A,k})] - [F_{P,1}(t_{A,P},s^*_{A,P}) - F_{P,0}(t_{A,P},s^*_{A,P})],$$
    which converges to zero by equicontinuity of $P(Y(j) \le t)$ by A\ref{as.cont_outcomes}, $|\hat{s}_{A,k}(X) - s^*_{A,P}(X)| \overset{P}{\longrightarrow} 0$ by A\ref{as.limit_of_s} if $s_{A,P} = s^*_{A,P}$, and since $\hat{t}_{A} - t_{A,P} \overset{P}{\longrightarrow} 0$ (see Step 2 in the proof of Theorem \ref{th.asymptotic_dist}). 

    To show $\sqrt{n}(\bar{\theta}_{A,P_n} - \theta_{A,P_n}^*) = o_{P_n}(1)$, it is enough to show $\sqrt{n} \left( \bar{\theta}_{A,P_n,k} - \theta_{A,P_n}^* \right) = o_{P_n}(1)$ since $K$ is fixed. Note that $F_{P,j}(t, s) = \mathbb{E}[F_j(s(X) + t \mid X)]$, where $F_j(\cdot \mid x)$ is the CDF of $Y(j) \mid X = x$. Hence $\Delta_{P_n}(t, s) = \mathbb{E}[g_X(s(X) + t)]$, where $g_x(u) := F_1(u \mid x) - F_0(u \mid x)$. By construction (see (\ref{def.best_sL})--(\ref{def.best_sU})), $u \mapsto g_x(u)$ is maximized (resp.\ minimized) at $u = s^*_{L,P_n}(x)$ (resp.\ $s^*_{U,P_n}(x)$); under condition (i) of the theorem, $g_x$ is twice continuously differentiable in $u$ with second derivative uniformly bounded.

    Focus on the case of the lower bound (the upper bound is analogous). Since $\theta_{L,P_n}^* = \Delta_{P_n}(0, s^*_{L,P_n}) = \mathbb{E}[g_X(s^*_{L,P_n}(X))]$, we have
    \begin{align*}
        \bar{\theta}_{L,P_n,k} - \theta_{L,P_n}^* & = \Delta_{P_n}(\hat{t}_L, \hat{s}_{L,k}) - \Delta_{P_n}(0, s^*_{L,P_n}) \\
        & = \mathbb{E}\bigl[ g_X(\hat{s}_{L,k}(X) + \hat{t}_L) - g_X(s^*_{L,P_n}(X)) \bigm| \hat{s}_{L,k}, \hat{t}_L \bigr],
    \end{align*}
    where the expectation is over $X$ with $\hat{s}_{L,k}, \hat{t}_L$ held fixed (and independent of $X$). Since $g_x'(s^*_{L,P_n}(x)) = 0$, a second-order Taylor expansion gives, for some $\tilde{u}(x)$ between $\hat{s}_{L,k}(x) + \hat{t}_L$ and $s^*_{L,P_n}(x)$,
    \begin{equation} \label{eq.proof.th.exactness.taylor}
        \bigl| g_x(\hat{s}_{L,k}(x) + \hat{t}_L) - g_x(s^*_{L,P_n}(x)) \bigr| = \tfrac{1}{2} \bigl| g_x''(\tilde{u}(x)) \bigr| \bigl( \hat{s}_{L,k}(x) + \hat{t}_L - s^*_{L,P_n}(x) \bigr)^2.
    \end{equation}
    Using the uniform bound on $g_x''$ and $(a + b)^2 \le 2 a^2 + 2 b^2$,
    \begin{align*}
        \bigl| \bar{\theta}_{L,P_n,k} - \theta_{L,P_n}^* \bigr| & \lesssim \mathbb{E}\bigl[ \bigl( \hat{s}_{L,k}(X) - s^*_{L,P_n}(X) \bigr)^2 \bigm| \hat{s}_{L,k} \bigr] + \hat{t}_L^2,
    \end{align*}
    where $\lesssim$ denotes bounded above up to a universal constant (not depending on $n$, $P_n$, $\hat{s}_{L,k}$ or $\hat{t}_L$).

    By condition (ii), $\sqrt{n} \, \mathbb{E}\bigl[ ( \hat{s}_{L,k}(X) - s^*_{L,P_n}(X) )^2 \bigm| \hat{s}_{L,k} \bigr] = o_{P_n}(1)$. It remains to show $\sqrt{n} \, \hat{t}_L^2 = o_{P_n}(1)$. Under the conditions of the theorem, $s_{L,P_n} = s^*_{L,P_n}$, so $t_{L,P_n} = 0$ (since $s^*_{L,P_n}(x)$ is the maximizer of $g_x$, the population maximizer of $\Delta_{P_n}(\cdot, s^*_{L,P_n})$ is at $0$, and is unique by A\ref{as.uniqueness_t_infty}). Similar to Proposition 3.1 of \citet{fan2010sharp}, standard M-estimator rate arguments (e.g., Theorem 3.2.5 in \citet{vandervaartWeakConvergenceEmpirical2023}) combined with the equicontinuity in Step 2 of the proof of Theorem \ref{th.asymptotic_dist} and the uniform bound on $g_x''$ yield $\hat{t}_L = O_{P_n}(n^{-1/3})$, so $\sqrt{n} \, \hat{t}_L^2 = o_{P_n}(1)$. Therefore $\sqrt{n} \bigl( \bar{\theta}_{L,P_n,k} - \theta_{L,P_n}^* \bigr) = o_{P_n}(1)$.
\end{proof}

\end{appendices}
\end{bibunit}

\clearpage
\begin{bibunit}
\begin{center}
  {\Large\bfseries Online Appendix}\\
  \vspace{0.4em}
  {\normalsize Supplemental Appendix to ``Predicting the Distribution of Treatment Effects: A Covariate-Adjustment Approach''}
\end{center}
\vspace{1em}

\begin{appendices}

\setcounter{section}{0}
\renewcommand\appendixname{Online Appendix}
\renewcommand{\thesection}{OA\Alph{section}}
\renewcommand{\thesubsection}{OA\Alph{section}.\arabic{subsection}}

\phantomsection

\subsection{Relaxing constant propensity score} \label{appendix.relaxing_prop_score}

I propose two alternatives to the cross-fitting estimators in Definition \ref{def.theta_hat_cf}. 
For simplicity, I focus on the lower bound. 

\begin{itemize}
    \item[(A)] \textbf{Constant propensity score within groups:} Let there be fixed $\bar{G} > 0$ groups, such that the propensity score is fixed within-group (and bounded away from zero and one). 
    Define $n_{1, g}$ as the number of observations in group $g$ assigned to treatment and $n_{0, g}$ the number of observations assigned to control.  
    Finally, let $G_i \in \{1, \dots, \bar{G} \}$ be the group to which observation $i$ belongs. 
    Then, a new cross-fitting estimator for the lower bound can defined as 
    \begin{align*}
        \widehat{\theta}^A_{L} & = \max_t \Biggl\{ \frac{1}{\bar{G}} \sum_{g = 1}^{\bar{G}} \Biggl[ \frac{1}{n_{1,g}} \sum_{\substack{i : D_i = 1 \\ G_i = g}} \mathbb{I}(Y_i \le \hat{s}_{L, k(i)}(X_i) + t) \\
        & \hspace{60pt} - \frac{1}{n_{0,g}} \sum_{\substack{i : D_i = 0 \\ G_i = g}} \mathbb{I}(Y_i \le \hat{s}_{L, k(i)}(X_i) + t) \Biggr] \Biggr\},
    \end{align*}
    where the data is split such that for each fold $k$, there is balance in the number of treated and control units within each group. 
    The asymptotic distribution follows very similarly to the proof of Theorem \ref{th.asymptotic_dist_simplified} by applying the arguments in (\ref{eq.th.asymptotic_dist.1}) and (\ref{eq.th.asymptotic_dist.2}) by each group-fold combination, instead of by fold. 
    The asymptotic variance-covariance matrix of the estimators is given by
    $$\sigma^2_{A,n} = \sum_{g=1}^{\bar{G}} \pi(g)^2 \left( \frac{1}{\pi_1(g)} Var[Z_{A}(1) | G = g] + \frac{1}{\pi_0(g)} Var[Z_{A}(0) | G = g] \right),$$
    \begin{align*}
        & \sigma_{L,U,n} = \\
        & \sum_{g=1}^{\bar{G}} \pi(g)^2 \left( \frac{1}{\pi_1(g)} Cov[Z_{L}(1), Z_{U}(1) | G = g] + \frac{1}{\pi_0(g)} Cov[Z_{L}(0), Z_{U}(0) | G = g] \right),
    \end{align*}
    where $A \in \{L, U\}$, and 
    $$\pi(g) = P(G = g), \quad \pi_j(g) = P(D = j | G = g),$$
    $$Z_{A}(j) = \mathbb{I}(Y(j) \le s_{A,P}(X) + t_{A,P}).$$
    
    Note that, as before, the elements of the variance-covariance matrix can be estimated using sample analogues.
    \item[(B)] \textbf{Known propensity score:} Assume the propensity score $p(x) = P(D = 1 | X = x)$ is known and bounded away from zero and one. 
    Then, the estimator for the lower bound can be defined as
    \begin{align*}
        \widehat{\theta}^B_{L} & = \max_t \Biggl\{ \frac{1}{n} \sum_{i = 1}^n \biggl( \frac{D_i}{p(X_i)} \mathbb{I}(Y_i \le \hat{s}_{L, k(i)}(X_i) + t) \\
        & \hspace{60pt} - \frac{1 - D_i}{1 - p(X_i)} \mathbb{I}(Y_i \le \hat{s}_{L, k(i)}(X_i) + t) \biggr) \Biggr\}.
    \end{align*}

    The proof of Theorem \ref{th.asymptotic_dist_simplified} extends to this case with minor adjustments. 
    Note that since $p(x)$ is (uniformly) bounded away from zero and one, the same Donsker properties hold for $(\widehat{\theta}^B_{L}, \widehat{\theta}^B_{U})$. 
    The asymptotic variance-covariance matrix of the estimators is given by
    $$\sigma^2_{A,P} = \mathbb{E}_P \left[ \left( \frac{D}{p(X)^2} + \frac{1 - D}{(1 - p(X))^2} \right) \mathbb{I}(Y \le s_{A,P}(X) + t_{A,P}) \right] - \theta_{A,P}^2,$$
    \begin{align*}
        & \sigma_{L,U,P} = \\
        & \mathbb{E}_P \left[ \left( \frac{D}{p(X)^2} + \frac{1 - D}{(1 - p(X))^2} \right) \mathbb{I}(Y \le s_{L,P}(X) + t_{L,P}) \mathbb{I}(Y \le s_{U,P}(X) + t_{U,P}) \right] \\
        & - \theta_{L,P} \theta_{U,P},
    \end{align*}
    where $A \in \{L, U\}$ and $p(x) = P(D = 1 | X = x)$.
\end{itemize}

\clearpage

\subsection{Definition of Two-Sided Confidence Intervals} \label{appendix.stoye}

The definition of $\widehat{CI}_\alpha$ follows from \citet{stoye2009more}. 
Let $h_n$ denote a preassigned sequence such that $h_n \to 0$ and $\sqrt{n} h_n \to \infty$, and define 
$$\Lambda = \begin{cases} \widehat{\theta}_U - \widehat{\theta}_L, & \widehat{\theta}_U - \widehat{\theta}_L > h_n \\ 0, & \text{otherwise} \end{cases}$$
Then, the CI is defined by
$$\widehat{CI}_\alpha = \begin{cases} \left[ \widehat{\theta}_L - \widehat{c}_L \frac{\widehat{\sigma}_L}{\sqrt{n}}, \widehat{\theta}_U + \widehat{c}_U \frac{\widehat{\sigma}_U}{\sqrt{n}} \right], & \text{ if } \widehat{\theta}_L - \widehat{c}_L \frac{\widehat{\sigma}_L}{\sqrt{n}} \le \widehat{\theta}_U + \widehat{c}_U \frac{\widehat{\sigma}_U}{\sqrt{n}} \\ \emptyset, & \text{ otherwise} \end{cases}$$
where $(\widehat{c}_L, \widehat{c}_U)$ minimize $(\widehat{\sigma}_L \widehat{c}_L + \widehat{\sigma}_U \widehat{c}_U)$ subject to the constraints that
\begin{align*}
    P \left( -\widehat{c}_L \le Z_1 \land \frac{\widehat{\sigma}_{L,U}}{\widehat{\sigma}_{L} \widehat{\sigma}_{U}} Z_1 \le \widehat{c}_U + \frac{\sqrt{n} \Lambda}{\widehat{\sigma}_{U}} + Z_2 \sqrt{1 - \frac{\widehat{\sigma}^2_{L,U}}{\widehat{\sigma}^2_{L} \widehat{\sigma}^2_{U}}} \right) & \ge 1 - \alpha \\
    P \left( -\widehat{c}_L - \frac{\sqrt{n} \Lambda}{\widehat{\sigma}_{L}} + Z_2 \sqrt{1 - \frac{\widehat{\sigma}^2_{L,U}}{\widehat{\sigma}^2_{L} \widehat{\sigma}^2_{U}}} \le \frac{\widehat{\sigma}_{L,U}}{\widehat{\sigma}_{L} \widehat{\sigma}_{U}} Z_1 \land Z_1 \le \widehat{c}_U \right) & \ge 1 - \alpha
\end{align*}
and $Z_1$, $Z_2$ are independent standard normal random variables.

Importantly, calculating $\widehat{CI}_\alpha$ relies on choosing the hyperparameter $h_n$. A possible choice is Stoye's suggestion of $h_n = n^{-1/2} (\log \log n)^{1/2}$, and other possibilities are $h_n = n^{-1/2} (\log n)^{1/2}$ and $h_n = n^{-1/2} (Q \log \log n)^{1/2}$ for $Q \ge 2$, as suggested in \citet{andrews2010inference}. 
Discussing the best choice for $h_n$ is beyond the scope of this paper, but a practical procedure for applications is to assess the sensitivity of the CI to different choices of $h_n$.

\clearpage

\subsection{Relaxing Assumption \ref{as.uniqueness_t_infty}} \label{appendix.relaxing_maximizer}

I propose two alternatives to Assumption \ref{as.uniqueness_t_infty}. 
The first requires changing the definition of the estimators $\widehat{\theta}_L$ and $\widehat{\theta}_U$, while the second relaxes the assumption by studying the asymptotic distribution of the original estimators without the condition of a unique maximizer. 

\begin{itemize}
    \item[(A)] \textbf{Changing the estimators:}
    The first alternative is to change the definition of $\widehat{\theta}_L$ and $\widehat{\theta}_U$ to avoid the optimization problem. 
    For example, $t$ could be fixed at $t=0$ or learned from other folds inside the cross-fitting procedure. 
    Focusing on the case of $\widehat{\theta}_L$, when estimating $\hat{s}_{L, k}$, one could estimate $\hat{t}_k$ (with data from all except the $k$-th fold) which maximizes the expression inside the $k$-th fold, and estimate the lower bound as 
    $$\frac{1}{n_1} \sum_{i : D_i = 1} \mathbb{I}(Y_i \le \hat{s}_{L, k(i)}(X_i) + \hat{t}_k) - \frac{1}{n_0} \sum_{i : D_i = 0} \mathbb{I}(Y_i \le \hat{s}_{L, k(i)}(X_i) + \hat{t}_k)$$

    This would be valid since it is equivalent to a new estimator $\tilde{s}_{L, k} = \hat{s}_{L, k} + \hat{t}_k$, and there is no optimization in the final estimator. 
    
    \item[(B)] \textbf{Asymptotic distribution without unique maximizer:}
    For simplicity, I focus on the case of the lower bound. 
    Define 
    $$\widehat{\Delta}(t) = \frac{1}{n_1} \sum_{i : D_i = 1} \mathbb{I}(Y_i \le \hat{s}_{L, k(i)}(X_i) + t) - \frac{1}{n_0} \sum_{i : D_i = 0} \mathbb{I}(Y_i \le \hat{s}_{L, k(i)}(X_i) + t),$$
    and note that $\widehat{\theta}_L = \max_t \widehat{\Delta}(t)$.
    From the proof of Theorem \ref{th.asymptotic_dist} (eq. \ref{eq.th.asymptotic_dist.2}), it follows that 
    \begin{align*}
        \mathbb{G}_{n, k}(t) & = \sqrt{n} \left( \widehat{\Delta}(t, \hat{s}_{L,k}) - \Delta_P(t, \hat{s}_{L,k}) \right) \\
        & = \sqrt{n} \left( \widehat{\Delta}(t, s_{L,P}) - \Delta_P(t, s_{L,P}) \right) + o_{P}(1)
    \end{align*}
    converges to a Gaussian process. 
    Note that $\sup_t \Delta_P(t, s_{L,P}) \le \theta^*_{L, P}$ (equation \ref{def.induced_bounds}). 
    If $\Delta_P(\cdot, s_{L,P}) \not\equiv 0$ (i.e., $\Delta_P(t, s_{L,P}) \neq 0$ for at least one value of $t$), the $\sup$ functional is Hadamard directionally differentiable at $\Delta_P(\cdot, s_{L,P})$ by Theorem 2.1 of \citet{carcamo2020directional}. 
    It follows that 
    $$\sqrt{n} \left( \widehat{\theta}_L - \sup_t \Delta_P(t, s_{L,P}) \right) = \sqrt{n} \left( \sup_t \widehat{\Delta}(t) - \sup_t \Delta_P(t, s_{L,P}) \right)  \overset{d}{\to} \sup_{t \in M} \mathbb{G}_{P}(t)$$
    uniformly in $P \in \mathcal{P}$, where $M = \{ t : \Delta_P(t, s_{L,P}) = \sup_{t'} \Delta_P(t', s_{L,P}) \}$ and $\mathbb{G}_{P}$ is a Gaussian process with covariance function given by 
    $$\frac{Cov[Z_{t}(1), Z_{t'}(1)]}{\pi} + \frac{Cov[Z_{t}(0), Z_{t'}(0)]}{1 - \pi},$$
    where $Z_{t}(j) = \mathbb{I}(Y(j) \le s_{L,P} + t)$, $\pi = P(D = 1)$.

    Inference can be conducted by approximating the distribution of $\sup_{t \in M} \mathbb{G}_{P}(t)$. 
    This could be done, for example, by conservatively estimating $M$ with $\widehat{M}$, such that $M \subseteq \widehat{M}$ with probability approaching one. 
    Alternatively, one could apply the bootstrap. 
    The validity of the bootstrap follows by combining the argument of \citet{carcamo2020directional}, who show that the $\sup$ function is Hadamard differentiable, with a conditional functional delta method argument (e.g., Theorem 3.10.11 in \citet{vandervaartWeakConvergenceEmpirical2023}).

\end{itemize}

\clearpage

\subsection{Additional Tables} \label{appendix.additional_tables}

\begin{table}[ht]
    \centering
    \caption{Bounds on fraction harmed by microcredit - alternative values of $\delta$}
    \label{tab.deltas}
    \begin{threeparttable}
    \begin{tabular}{@{}l|c|c@{}}
    \toprule
    & \begin{tabular}{@{}c@{}}Lower Bound\\($\delta = -0.05$)\end{tabular} & \begin{tabular}{@{}c@{}}Upper Bound\\($\delta = 0.05$)\end{tabular} \\
    \midrule
    \midrule
    Bosnia & 
    \begin{tabular}[c]{@{}c@{}}0.053\\(0.027)\\{[}0.027{]}\end{tabular} & 
    \begin{tabular}[c]{@{}c@{}}0.971\\(0.018)\\{[}0.054{]}\end{tabular} \\ \midrule
    Egypt & 
    \begin{tabular}[c]{@{}c@{}}0.182\\(0.022)\\{[}0.000{]}\end{tabular} & 
    \begin{tabular}[c]{@{}c@{}}0.749\\(0.022)\\{[}0.000{]}\end{tabular} \\ \midrule
    Philippines (1) & 
    \begin{tabular}[c]{@{}c@{}}0.034\\(0.038)\\{[}0.181{]}\end{tabular} & 
    \begin{tabular}[c]{@{}c@{}}0.950\\(0.039)\\{[}0.101{]}\end{tabular} \\ \midrule
    Philippines (2) & 
    \begin{tabular}[c]{@{}c@{}}0.082\\(0.032)\\{[}0.006{]}\end{tabular} & 
    \begin{tabular}[c]{@{}c@{}}0.853\\(0.009)\\{[}0.000{]}\end{tabular} \\ \midrule
    South Africa & 
    \begin{tabular}[c]{@{}c@{}}0.022\\(0.020)\\{[}0.130{]}\end{tabular} & 
    \begin{tabular}[c]{@{}c@{}}0.873\\(0.033)\\{[}0.000{]}\end{tabular} \\
    \bottomrule
    \end{tabular}
    \begin{tablenotes}
    \small
    \item Standard errors in parentheses, and p-values in brackets. The null hypothesis for the lower bound is $\theta^*_L = 0$, and for the upper bound $\theta^*_U = 1$. Response variable: $log(1 + household\ income )$.
    \end{tablenotes}
    \end{threeparttable}
\end{table}

\begin{table}[htbp]
    \begin{threeparttable}[ht]
    \centering
    \caption{Bounds on fraction harmed by microcredit - business profits}
    \label{tab.bprofits}
    \begin{tabular}{@{}l>{\begin{tabular}[c]{@{}c@{}}\end{tabular}}c>{\begin{tabular}[c]{@{}c@{}}\end{tabular}}c>{\begin{tabular}[c]{@{}c@{}}\end{tabular}}c>{\begin{tabular}[c]{@{}c@{}}\end{tabular}}c@{}}
    \toprule
    & \begin{tabular}[c]{@{}l@{}}Lower Bound\end{tabular} & \begin{tabular}[c]{@{}c@{}}Lower Bound\\(No Covariates)\end{tabular} & \begin{tabular}[c]{@{}c@{}}Upper Bound\end{tabular} & \begin{tabular}[c]{@{}c@{}}Upper Bound\\(No Covariates)\end{tabular} \\ \midrule
    Bosnia & 
    \begin{tabular}[c]{@{}c@{}}0.012\\ {[}0.354{]}\end{tabular} & 
    \begin{tabular}[c]{@{}c@{}}0.006\\ {[}0.171{]}\end{tabular} & 
    \begin{tabular}[c]{@{}c@{}}0.918\\ {[}0.003{]}\end{tabular} & 
    \begin{tabular}[c]{@{}c@{}}0.936\\ {[}0.007{]}\end{tabular} \\ \midrule
    Egypt & 
    \begin{tabular}[c]{@{}c@{}}0.249\\ {[}0.000{]}\end{tabular} & 
    \begin{tabular}[c]{@{}c@{}}0.008\\ {[}0.328{]}\end{tabular} & 
    \begin{tabular}[c]{@{}c@{}}0.828\\ {[}0.000{]}\end{tabular} & 
    \begin{tabular}[c]{@{}c@{}}0.958\\ {[}0.008{]}\end{tabular} \\ \midrule
    Philippines (1) & 
    \begin{tabular}[c]{@{}c@{}}0.077\\ {[}0.025{]}\end{tabular} & 
    \begin{tabular}[c]{@{}c@{}}0.071\\ {[}0.036{]}\end{tabular} & 
    \begin{tabular}[c]{@{}c@{}}0.902\\ {[}0.006{]}\end{tabular} & 
    \begin{tabular}[c]{@{}c@{}}0.931\\ {[}0.005{]}\end{tabular} \\ \midrule
    Philippines (2) & 
    \begin{tabular}[c]{@{}c@{}}0.094\\ {[}0.002{]}\end{tabular} & 
    \begin{tabular}[c]{@{}c@{}}0.046\\ {[}0.085{]}\end{tabular} & 
    \begin{tabular}[c]{@{}c@{}}0.961\\ {[}0.005{]}\end{tabular} & 
    \begin{tabular}[c]{@{}c@{}}0.981\\ {[}0.089{]}\end{tabular} \\
    \bottomrule
    \end{tabular}
    \begin{tablenotes}
    \small
    \item p-values in brackets. The null hypothesis for the lower bound is $\theta^*_L = 0$, and for the upper bound $\theta^*_U = 1$. $1 -$ Upper Bound represents fraction benefited by microcredit. South Africa is excluded as business profits were not measured in that study.
    \end{tablenotes}
    \end{threeparttable}
\end{table}

\end{appendices}

\clearpage
\renewcommand{\baselinestretch}{1}
\putbib[CAIDE]
\end{bibunit}
\end{document}